\begin{document}

\title{Coded Many-User Multiple Access  via \\  Approximate Message Passing}
\author{Xiaoqi Liu\thanks{X.\@ Liu and R.\@ Venkataramanan are with the Department of Engineering, University of Cambridge, UK. Author emails: \texttt{xl394@cam.ac.uk}, \texttt{kuanhsieh95@gmail.com}, \texttt{rv285@cam.ac.uk}.  This paper was presented in part at the 2024 IEEE International Symposium on Information Theory.} \;\;\;\;\; Kuan Hsieh \;\;\;\;\; Ramji Venkataramanan$^*$}
\maketitle

%

\begin{abstract}
We consider communication over the Gaussian multiple-access channel in the regime where the number of users grows
linearly with the codelength. In this regime,  schemes based on sparse superposition coding can achieve a near-optimal tradeoff between spectral efficiency and signal-to-noise ratio. However, these schemes are feasible only for small values of user payload. This paper investigates efficient schemes for larger user payloads, focusing on  coded CDMA schemes where each user's information is encoded via a linear code before being modulated with a signature sequence. We propose an efficient approximate message passing (AMP) decoder that can be tailored to the structure of the linear code, and provide an exact asymptotic characterization of its performance. 
Based on this result, we consider  a decoder that integrates AMP and belief propagation and characterize its tradeoff between spectral efficiency and signal-to-noise ratio, for a  given target error rate. Simulation results show that the decoder  achieves state-of-the-art performance at finite lengths, with a coded CDMA scheme defined using LDPC codes and a spatially coupled matrix of signature sequences.
\end{abstract}

\section{Introduction}
 We consider communication over an $L$-user Gaussian multiple access channel (GMAC), which has  output of the form 
\begin{equation}\label{eq:standard_gmac}
\by = \sum_{\ell=1}^L \bc_\ell+\bveps \, ,   
\end{equation}
over $n$  channel uses. Here
$\bc_\ell\in\reals^n$ is the codeword of the  $\ell$-th user and $\bveps\sim \normal_n(\0, \sigma^2\bI)$ is the channel noise.   Motivated by modern applications in  machine-type communications, a  number of recent works have studied the GMAC in the \emph{many-user} or \emph{many-access} setting, where the number of users $L$ grows with the block length $n$  \cite{chen2017capacity, polyanskiy2017perspective, zadik2019improved,RaviK22}.

In this paper,  we study the many-user regime where $L, n\to \infty$ with the user density $\mu: = L/n$ converging to a  constant. 
 Each user transmits a fixed number of bits $k$ (payload) under a constant energy-per-information-bit constraint $\|\bc_\ell\|^2_2/k\le E_b$. The spectral efficiency is the total user payload per channel use, denoted  by $\S=(Lk)/n = \mu k$. In this regime, a key question is to understand the tradeoff between user density (or spectral efficiency), the signal-to-noise ratio $E_b/N_0$, and  the probability of decoding error. Here $N_0=2 \sigma^2$ is the noise spectral density.   
A popular measure of decoding performance is the per-user probability of error ($\PUPE$), defined as 
\begin{equation}
    \PUPE := \frac{1}{L} \sum_{\ell=1}^{L} \, \prob( \bc_\ell \neq \hat{\bc}_\ell),
    \label{eq:PUPE_def}
\end{equation}
where $\hat{\bc}_\ell$ is the decoded codeword for user $\ell$.

Polyanskiy \cite{polyanskiy2017perspective} and Zadik et al.\@  \cite{zadik2019improved} obtained converse and achievability bounds on the minimum ${E_b}/{N_0}$ required to achieve  $\PUPE \leq \epsilon$ for a given $\epsilon>0$, when the user density $\mu$ and user payload $k$ are fixed. These bounds were extended to the multiple-access channels with Rayleigh fading in  \cite{kowshik2019isit,kowshikPoly21}.
The achievability bounds in these works are obtained using Gaussian random codebooks and joint maximum-likelihood decoding, which is computationally infeasible. In contrast, the focus of our work is computationally efficient schemes for which the  tradeoff above can be precisely characterized.

\paragraph*{Coding schemes based on sparse superposition coding}

Sparse superposition codes were introduced by Barron and Joseph \cite{joseph2012least, joseph2014fast} for the 
\emph{single-user} Gaussian channel, but also give a useful framework for efficient communication over the many-user GMAC \cite{hsieh2022near}.  We briefly review sparse superposition coding in the context of the GMAC.  The codeword for user $\ell \in [L]$ is constructed as $\bc_\ell = \bA_\ell \bx_\ell$, where $\bA_\ell \in \reals^{n\times B}$ is a random matrix and $\bx_\ell \in \reals^{B}$ is a message vector with exactly one nonzero entry. (The value of the nonzero entry is pre-specified, and chosen to satisfy the energy constraint.) Since there are $B$ choices for the location of the nonzero, each user transmits $\log_2 B$ bits in $n$ uses of the GMAC, and the spectral efficiency is $\S=\mu \log_2 B$. For this construction, the combined channel model can be written as
\begin{equation}
    \label{eq:linear_regression}
\by =  \sum_{\ell=1}^L \bA_\ell \bx_\ell  + \bveps = \bA\bx + \bveps,
\end{equation}
where the design matrix $\bA\in\mathbb{R}^{n\times L B}$ is the horizontal concatenation of matrices $\bA_1, \ldots, \bA_L$,
and the message vector $\bx\in\mathbb{R}^{LB}$ is the concatenation of vectors $\bx_1, \ldots, \bx_L$. 

The decoding problem is to recover the message vector $\bx$ from $(\by, \bA)$. An efficient Approximate Message Passing (AMP) decoder for this sparse superposition  GMAC scheme was analyzed in \cite{hsieh2022near}. AMP is a family of iterative algorithms that has its origins in relaxations of belief propagation \cite{Kab03,boutrosCaire2002, donoho2009message}. An attractive feature of AMP decoding is that it allows an exact asymptotic characterization of its error performance through a deterministic recursion called `state evolution'.  For both i.i.d.\@ and spatially coupled choices of the design $\bA$,  the asymptotic tradeoff achieved by the AMP decoder between the  spectral efficiency and $E_b/N_0$  was precisely characterized in \cite{hsieh2022near} (for a target $\PUPE$). In fact, \cite{hsieh2022near} analyzed a more general scheme  where each user's message vector $\bx_\ell$ in \eqref{eq:linear_regression} can be drawn from a general discrete prior on $\reals^B$; sparse superposition coding is an important special case of this scheme.  

Fig.\@ \ref{fig:bounds_k8} 
shows the tradeoff for the sparse superposition scheme with AMP decoding for user payload $k=8$ and  target $\PUPE=10^{-4}$. We observe that the performance of the scheme with a spatially coupled design (dotted black plot) is close to the converse bound (red plot), i.e., it is nearly optimal for $\S > 1.6$, and is uniformly better than the achievability bound from \cite{zadik2019improved} (solid black plot). However, sparse superposition coding  is challenging to scale to large user payloads, e.g., $k=240$ bits. To see this, we recall that each user has a message vector of size $B=2^k$, so the size of the design matrix, and hence the AMP decoding complexity, grows exponentially with $k$. One solution is to divide each user's payload of $k$ bits into smaller chunks of $\tilde{k}$ bits, which are transmitted sequentially using  $k/\tilde{k}$ blocks of transmission. Fig.\@ \ref{fig:bounds_k240}
illustrates the performance   for $k=240$ and $\tilde{k}=8$. We observe that there is now a significant gap between the performance of the sparse superposition scheme and the achievability and converse bounds from \cite{zadik2019improved}.
This motivates the question studied in this paper: for large payloads, can we construct efficient coding schemes whose asymptotic performance is closer to the bounds?

\begin{figure}
    \centering
    \subfloat[$k=8$]{\includegraphics[width=0.5\linewidth]{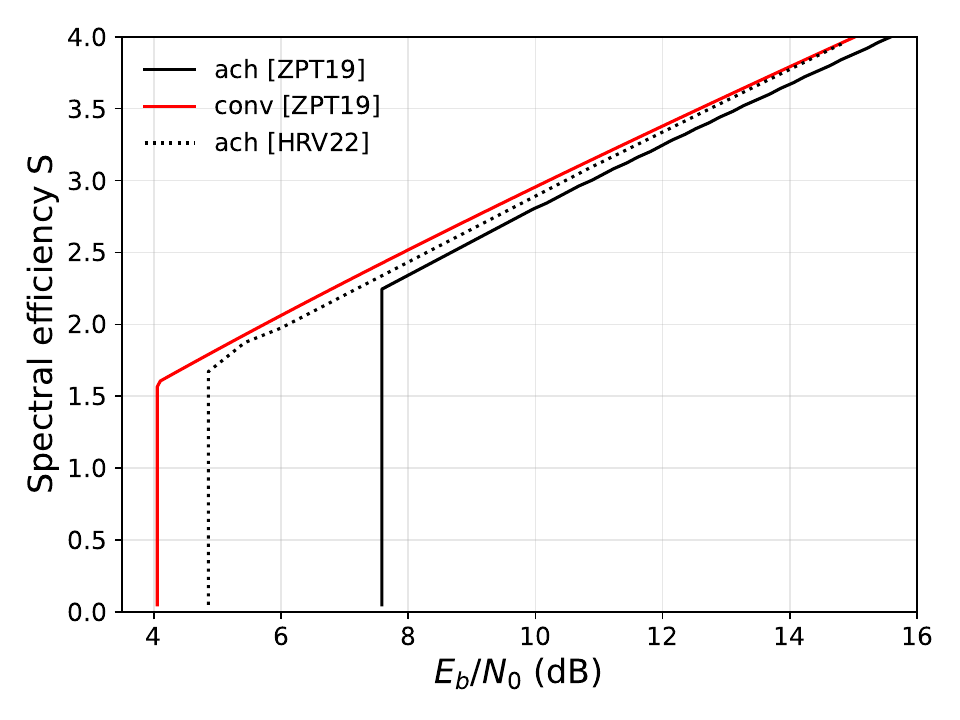}\label{fig:bounds_k8}}
    \subfloat[$k=240$]{\includegraphics[width=0.5\linewidth]{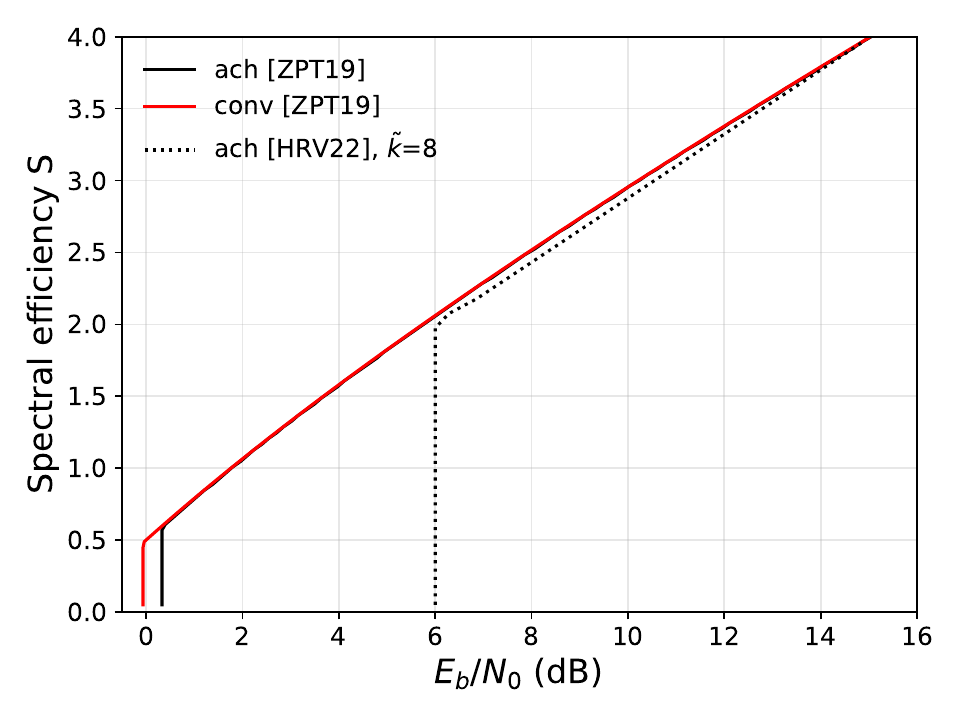} \label{fig:bounds_k240}}
    \caption{Asymptotic performance of spatially coupled sparse superposition scheme from \cite{hsieh2022near} (dotted black) 
    compared with the asymptotic achievability and converse bounds from \cite{zadik2019improved} (solid black and red),  for user payload  $k=8$ or $k=240$, and target $\PUPE=10^{-4}$. }
\end{figure}

\paragraph{Coding schemes based on CDMA}
Our starting point is binary code-division multiple access (CDMA), where each user $\ell \in [L]$ transmits one bit of information  by modulating a signature sequence (or spreading sequence) $\ba_\ell \in \reals^n$, i.e., the codeword $\bc_\ell = \ba_\ell x_\ell$ where $x_\ell \in \{ \pm \sqrt{E} \}$. The
decoding problem is to recover the vector of information symbols $\bx= [x_1, \ldots, x_L]^{\top}$ from the channel output vector
\begin{equation}
    \by = \sum_{\ell=1}^L \ba_\ell x_\ell  + \bveps =
    \bA \bx  +  \bveps \, ,
    \label{eq:binCDMA}
\end{equation}
where $\bA= [\ba_1, \ldots, \ba_L] \in \reals^{n\times L}$ is the matrix of signature sequences. We have used the same notation as 
in \eqref{eq:linear_regression} to highlight that binary CDMA is similar to the model in \eqref{eq:linear_regression},  with $B=1$. 

The optimal spectral efficiency of CDMA in the large system limit (with random signature sequences)  has been studied in a number of works, e.g.,
\cite{verduShamai99,shamaiVerdu01,CaireGuemRomyVerdu2004,tanaka2002cdma,guo2005randomly}. Assuming the signature sequences are i.i.d.\@ sub-Gaussian, the best known technique for efficiently decoding $\bx$ from $\by$ in \eqref{eq:binCDMA} is AMP \cite{donoho2009message, bayati2011dynamics}.  If each  user wishes to transmit $k >1$ bits in $n$ channel uses, the binary CDMA scheme requires $k$ blocks of transmission, with each block (and each signature sequence) having length $n/k$.
For large user payloads $k$, the binary CDMA  with AMP decoding performs poorly, with a tradeoff curve that is significantly worse than the sparse superposition scheme.

\subsection{Main contributions} The binary CDMA scheme described above transmits \emph{uncoded} user information.
     In this paper,  we show how a significantly better performance tradeoff can be achieved for large user payloads  (compared to the sparse superposition scheme and uncoded CDMA) using a coded CDMA scheme in which each user's information sequence is first encoded using a linear code before being multiplied with the signature sequence.
Our scheme uses an efficient AMP decoder that is tailored to the decoder of the linear code. Among potential linear codes, Low-Density Parity Check (LDPC) codes are a particularly effective choice. They are known to be capacity-achieving for binary input symmetric channels with efficient Belief Propagation (BP) decoding \cite{richardson_urbanke_2008}. In the coded CDMA setting, the soft-information available from the BP decoder can be leveraged within the AMP framework to iteratively refine the receiver's estimate of the user information.%


Coded CDMA schemes for multiple access (with different decoders) have been studied in earlier works \cite{reed1998iterative, turbo1999wang, caire2002iterative, caire2004iterative, ldpc2005wang}, which we discuss in the next subsection. 
The novel contribution of our work is a flexible AMP decoder that can be tailored to the structure of the linear code, and  an exact asymptotic characterization of its error performance in the many-user regime (Theorems \ref{thm:UER_BER} and \ref{thm:UER_BER_sc}).  Specifically, we show how a decoder for the underlying code, such as a maximum-likelihood or a belief propagation (BP) decoder, can be incorporated within the AMP algorithm with rigorous asymptotic guarantees (Corollary \ref{corr:SEvalidity}). 
Simulation results validate the theory and demonstrate the benefits of the concatenated scheme at finite lengths.

We will use a \emph{spatially coupled} construction for the design matrix $\bA$. Spatial coupling was originally proposed \cite{felstrom1999time} as a way to improve the decoding threshold of LDPC codes. Spatially coupled LDPC codes with belief propagation have  since been shown to be capacity-achieving  decoding for a large class of binary-input channels \cite{kudekar2013spatially}. Spatially coupled sparse superposition codes with AMP decoding have also been shown to be capacity-achieving, for both AWGN \cite{barbier2017approximate,rush2021capacity} and a broader class of channels \cite{barbier2019universal}. Moreover, spatially coupled designs (with estimation via AMP)  achieve Bayes-optimal error for both linear \cite{krzakala2012statistical, donoho2013information,takeuchi2015performance} and   generalized linear models \cite{cobo2023bayes}. To keep the exposition simple, we first present the AMP decoder and analysis for the i.i.d.\@ Gaussian design in Sections \ref{sec:iid_AMP} and \ref{sec:iid_denoisers} before generalizing the results to spatially coupled designs in Section \ref{sec:SC_AMP}.

We emphasize that our setting is distinct from unsourced random access over the GMAC \cite{polyanskiy2017perspective, Amalladinne2020Coded,fengler2021SPARCS,amalladinne2022unsourced}, 
where all the users share the same codebook and only a subset of them are active. In our case, each user has a distinct signature sequence and all of them are active. While the latter is particularly relevant in designing grant-free communication systems,  coding schemes  for this setting often rely on   dividing a common codebook into sections for different users. 
Extending the ideas in this paper to unsourced random access is a promising direction for future work.

 \subsection{Related work} 
\paragraph*{Coded CDMA for multiple access channels} 

Efficient decoders for coded CDMA schemes  have been studied in  a number of works \cite{reed1998iterative, turbo1999wang, caire2002iterative, caire2004iterative, ldpc2005wang}. These decoders are typically based  on iterative soft interference cancellation, and have two components: a   multiuser detector,  and a single-user channel decoder for each user. The idea is to iteratively exchange soft-information between the two components to  approximate the posterior probabilities on the information symbols.  In the spirit of these decoders, one could use an AMP algorithm for multiuser detection followed by  single-user channel decoding (e.g., via BP).  Our results show that we  can significantly improve on this approach by integrating the channel decoder within the AMP algorithm (see Fig. \ref{fig:bp_v_marginal}). 

Optimal power allocation across users for iterative schemes based on multiuser detection and single-user decoding was studied in \cite{caire2004iterative}. We expect that the spatial coupling we use has a similar effect to optimal power allocation. This is based on a similar phenomenon observed for power-allocated vs.\@ spatially coupled  sparse superposition codes for single-user AWGN channels \cite{venkataramanan19monograph}.   We note that our scheme can be viewed as an instance of non-orthogonal multiple access (NOMA) \cite{Ding2017,Yue2018}, a class of schemes based on superposition coding that has been studied widely in recent years.

\paragraph*{Approximate Message Passing}AMP algorithms were first proposed for compressed sensing \cite{donoho2009message, bayati2011dynamics} and its variants  \cite{Ziniel2013efficient}.  These algorithms    have since been applied to a range of problems including estimation in generalized linear models and low-rank matrix estimation.  We refer the interested reader to \cite{feng2022unifying} for a survey. In the context of communication over AWGN channels, AMP has been used as a decoder for sparse superposition codes (SPARCs) \cite{barbier2017approximate, rush2017capacity,venkataramanan19monograph} and for compressed coding \cite{Liang20Compressed}. For the GMAC, in addition to  efficient schemes for smaller user payloads \cite{hsieh2022near}, spatially coupled SPARCs  have been used to obtain improved  achievability bounds \cite{kowshik2022improved, liu2024RA}.   SPARC-based concatenated schemes with AMP decoding have been proposed for both single-user AWGN channels \cite{greig2018techniques, cao21sparclist, ebert2025sparse} and  the GMAC where users have different codebooks \cite{liu2024RA} or share the same codebook \cite{fengler2021SPARCS, amalladinne2022unsourced}.

 In most of these concatenated schemes \cite{greig2018techniques, cao21sparclist, fengler2021SPARCS, Liang20Compressed} the AMP decoder for SPARCs does not explicitly use the structure of the outer code (which is decoded separately).  Two key exceptions are the SPARC-LDPC concatenated schemes  in \cite{amalladinne2022unsourced,ebert2025sparse}, which use an AMP decoding algorithm with an  integrated BP denoiser. Drawing inspiration from these works,  in Section \ref{subsec:BPdenoiser} we propose an AMP decoder with a BP denoiser for our concatenated scheme. Our scheme and its decoder differ from those in \cite{amalladinne2022unsourced,ebert2025sparse} in a few important ways: i) we do not use the SPARC message structure, and ii)  we treat each user's codeword as a row  of a signal matrix and devise an AMP algorithm with matrix iterates, a notable deviation from prior schemes where AMP operates on vectors. 

 The SPARC-LDPC scheme was recently extended to multiple access channels in \cite{ebert2024multiusersrldpccodes}. Since the scheme is based on sparse superposition codes, for the reasons discussed on p.\pageref{eq:PUPE_def}, the AMP decoder in \cite{ebert2024multiusersrldpccodes} can only handle a small number of users if the per-user payload is large (or a small payload with a large number of users). In contrast, our focus is to design an efficient scheme for a large number of users,  each with a  large fixed payload, e.g., $2000$ users with a payload of $300$ bits each.
 

\subsection{Notation} We write $[L]$ for the set $\{1, \ldots, L \}$.
 We use bold uppercase letters  for matrices, bold lowercase  for vectors, and  plain font for scalars.   We write $\ba_\ell$ for the $\ell$-th row or column of  $\bA$ depending on the context,  and $a_{\ell, i}$ for its $i$-th component.  A function $f:\reals^d\to \reals^d$ returns a column vector when applied to a column vector, and likewise for row vectors. 

\section{Concatenated coding scheme} 

The $k$-bit message of user $\ell$, denoted by $\bu_\ell\in \{0,1\}^{k}$, is mapped to a GMAC codeword $\bc_\ell \in \reals^{n}$ in two steps. First,  a rate ${k}/{d}$ linear code with generator matrix $\bG\in\{0,1\}^{ d \times  k}$ is used to produce a $d$-bit  binary codeword $\bG\bu_\ell\in\{0,1\}^{ d}$. Each $0$ code bit is then mapped to $\sqrt{E}$ and each $1$ bit code bit to $-\sqrt{E}$ to produce $\bx_\ell \in \{ \pm \sqrt{E} \}^d$. The magnitude $\sqrt{E}$ of each BPSK symbol will be  specified later in terms of the energy per bit constraint $E_b$. In the second step of encoding, for each user $\ell$, we take the outer-product of  $\bx_\ell$ with a  signature sequence  $\ba_\ell\in\reals^{\tn}$,  where $\tn:=n/d.$  This yields a matrix $\bC_\ell=\ba_\ell \bx_\ell^\top\in\reals^{\tn\times d}$.  The   final length-$n$ codeword  transmitted by user $\ell$ is simply $\bc_\ell=\vectorize(\bC_\ell) \in \reals^n$.

 \begin{figure}[t]
 \centering
 \includegraphics[width=0.7\linewidth]
 {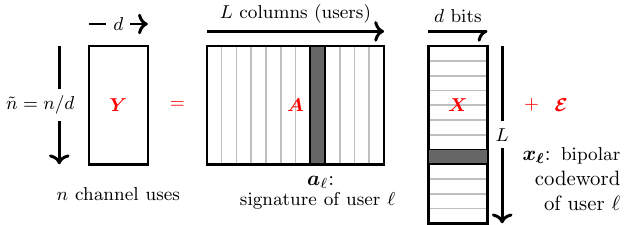}
\caption{Concatenated coding scheme for GMAC}
\label{fig:Y=AX+W}
\vspace{-10pt}
 \end{figure} 
 
Let
$\bX=
\begin{bmatrix}
    \bx_1, \dots, \bx_L 
\end{bmatrix}^\top \in\{\pm \sqrt{E}\}^{L\times d} $
be the signal matrix whose $\ell$-th row $\bx_\ell$ is the bipolar codeword of user $\ell$.
%
Let $\bA = \begin{bmatrix}
    \ba_1, \dots, \ba_L
\end{bmatrix} \in\reals^{\tn\times L}$
be
the design matrix whose columns are the signature sequences.  Then the channel output in \eqref{eq:standard_gmac} can be rewritten into matrix form: 
\begin{equation}
\bY 
=\sum_{\ell=1}^L\ba_\ell \bx_\ell^\top +\bEps
= \bA\bX +\bEps\in\reals^{\tn\times d}. \label{eq:Y=AX+W}
\end{equation}
See Fig.\@ \ref{fig:Y=AX+W} for an illustration.

\subsection{Assumptions} \label{subsec:assumptions}

We consider random  signature sequences with independent entries and expected squared $\ell_2$-norm of  one, i.e., $\E\|\ba_\ell\|_2^2=1$. For example, with i.i.d.\@ Gaussian signature sequences, the entries of the design  matrix are  $A_{i\ell} \stackrel{\text{i.i.d.}}{\sim} \normal(0, 1/\tn)$ for $i \in [\tn], \,  \ell \in [L]$.
Due to concentration of measure, this implies that $\|\ba_\ell\|_2^2 \to 1$ as $n\to\infty$. Hence, the energy-per-bit constraint can be   enforced asymptotically by choosing $E$ so that $\|\bc_\ell\|_2^2/k = \|\ba_\ell\bx_\ell^\top\|_{\text{F}}^2/k \to Ed/k\le E_b$ as $n\to\infty$.  We make the natural assumption that the information bits $\bu_\ell \in  \{ 0,1 \}^k$ are uniformly random, for each user $\ell \in [L]$. We also assume that the noise variance $\sigma^2$ in \eqref{eq:standard_gmac} is known, a mild assumption since $\sigma^2$ can be consistently estimated as:
\begin{align}
    \widehat{\sigma}^2 &=\frac{\| \bY \|^2_{\text{F}}}{n} -\frac{1}{n}\sum_{i\in[\tn]}\sum_{j\in[d]}\E\left[\bigg(\sum_{\ell\in[L]} A_{i\ell} X_{\ell j}\bigg)^2\right] \nonumber\\
    & = \frac{\| \bY \|^2_{\text{F}}}{n} - \frac{1}{n} \sum_{i\in[\tn]}\sum_{j\in[d]}  \sum_{\ell\in[L]}\E\left[ A_{i\ell}^2\right]\cdot E\nonumber\\
    &  = \frac{\| \bY \|^2_{\text{F}}}{n} -  d\mu  E, \nonumber
\end{align}
where the last equality uses $ \sum_{i\in[\tn]} \E\left[ A_{i\ell}^2\right] = \E\|\ba_\ell\|_2^2=1$, for $\ell \in [L]$.

We consider the  asymptotic limit where $L/n \to \mu$ as $n,L \to \infty$, for a user density $\mu >0$ of constant order.  We emphasize that $d$ is fixed and does not scale as $n,L \to \infty$. Therefore $\tn/L = ({n}/{d})/L  \to {1}/{(d\mu)}$ is also of constant order.

\section{AMP decoder for i.i.d.\@ design}\label{sec:iid_AMP}
Consider the  i.i.d.\@ Gaussian design matrix $\bA$ with entries $\stackrel{\text{i.i.d.}}{\sim}\normal(0, 1/\tn)$. 
The decoding task is to recover the signal matrix $\bX$  from the channel observation $\bY$ in \eqref{eq:Y=AX+W}, given the design matrix $\bA$ and the channel noise variance $\sigma^2$ (or its estimate).  A good decoder must take advantage of the prior distribution on $\bX$: recall that each row of $\bX$ is an independent codeword taking values in $\{ \pm \sqrt{E} \}^d$, defined via  the underlying rate $k/d$ linear code.  The prior distribution of each row of $\bX$ induced by the linear code is denoted by $P_{\bar{\bx}}$. Note that $P_{\bar{\bx}}$ assigns equal probability to  $2^k$  vectors in $\{ \pm \sqrt{E} \}^d$.

The AMP decoder recursively produces estimates $\bX^t\in \reals^{L\times d}$ of $\bX$ for iteration $t\ge 0$. This is done via a sequence of denoising functions $\eta_t$ that can be tailored to the prior $P_{\bar{\bx}}$.  Starting from an initializer $\bX^0=\bzero_{L\times d}$, for $t \ge 0$ the AMP decoder computes: 
\begin{align}
&\bZ^t = \bY -\bA\bX^t +\frac{1}{\tn}\bZ^{t-1} 
\left[\sum_{\ell=1}^L \eta_{t-1}'\left(\bs^{t-1}_\ell\right)\right]^\top ,\label{eq:alg_Zt}\\
&\bX^{t+1} =\eta_t\left(\bS^t\right),\quad \bS^t = \bX^t+\bA^\top \bZ^t,\label{eq:alg_Xt}
\end{align}
where $\eta_t: \reals^d\to \reals^d$ applies row-wise to matrix inputs, and $\eta_t'(\bs) = \frac{\de \eta_t(\bs)}{\de \bs}\in\reals^{d\times d}$ is the derivative (Jacobian) of $\eta_t$.  Quantities with negative indices are set to all-zero matrices. 
When $d=1$, \eqref{eq:alg_Zt}--\eqref{eq:alg_Xt}  reduces to the  classical AMP algorithm \cite{bayati2011dynamics} for estimating a vector signal in a linear model.

\paragraph{State Evolution (SE)}  As $n,L \to \infty$ (with $\frac{L}{n} \to \mu$), the memory term $ \frac{1}{\tn}\bZ^{t-1} 
\left[\sum_{\ell=1}^L \eta_{t-1}'\left(\bs^{t-1}_\ell\right)\right]^\top$   term  in \eqref{eq:alg_Zt}
ensures that the row-wise empirical distribution of $\bZ^t \in \reals^{\tn \times d}$ converges to a Gaussian $\normal_d(\bzero, \bSigma^t)$ for $t \ge 1$. Furthermore, the row-wise empirical distribution of $(\bS^t - \bX) \in \reals^{L \times d}$ also converges to the same Gaussian $\normal_d(\bzero, \bSigma^t)$. The covariance matrix $\bSigma^t \in \reals^{d \times d}$ is iteratively  defined via the following state evolution (SE) recursion, for $t \ge 0$:
\begin{align}
&\bSigma^{t+1}  = \sigma^2\bI_d \, +  \, d\mu  \E\left\lbrace [ \eta_{t}(\bar{\bx} +\bg^{t})-\bar{\bx}] [\eta_{t}(\bar{\bx} +\bg^{t})-\bar{\bx}]^\top
\right\rbrace. \label{eq:SE_Sigma_k+1}
\end{align}
Here   $\bg^t\sim \normal_d(\bzero, \bSigma^t)$ is independent of $\bar{\bx} \sim P_{\bar{\bx}}$, and $\bI_d$ is the $d \times d$ identity matrix. The expectation in \eqref{eq:SE_Sigma_k+1} is with respect to $\bar{\bx}$ and $\bg^t$,  and the iteration is initialized with $\bSigma^0 = (\sigma^2 +d\mu E)\bI_d$. We shall refer to \eqref{eq:SE_Sigma_k+1} as `iid-SE' later on because it is associated with the i.i.d.\@ Gaussian design.

The  convergence of the row-wise empirical distribution  of $\bS^t$ to  the law of $\bar{\bx} + \bg^{t} $ follows by applying standard results in AMP theory \cite{javanmard2013state, feng2022unifying}.  This distributional characterization of $\bS^t$ crucially informs the  choice of the denoiser $\eta_t$. Specifically, for each row $\ell \in [L]$, the role of the denoiser  $\eta_t$ is to estimate the codeword $\bx_\ell$ from an observation in zero-mean Gaussian noise with covariance matrix $\bSigma^t$.
In the next section, we  discuss the Bayes-optimal denoiser and two other  sub-optimal but computationally efficient  denoisers.

First, we provide a performance characterization of the AMP decoder with a generic Lipschitz-continuous denoiser in Theorem \ref{thm:UER_BER}. Decoding performance after  $t$ iterations of AMP decoding  can be measured via either the user-error rate $\UER =\frac{1}{L}\sum_{\ell=1}^L\ind\{\hat{\bx}^{t+1}_\ell \neq \bx_\ell\}$,
or the bit-error rate $ \BER=\frac{1}{Ld}\sum_{\ell=1}^L\sum_{j=1}^d \ind\{ \hat{x}^{t+1}_{\ell, j}\neq x_{\ell, j}\}$. Here  $\hat{\bx}^{t+1}_\ell = h_t(\bs^{t}_\ell)
$ is a hard-decision estimate of the codeword $\bx_\ell$, produced using a suitable function $h_t$, and $\hat{x}^{t+1}_{\ell, j}$  is the $j$th entry of $\hat{\bx}^{t+1}_\ell$.  For example, $h_t$ may quantize each entry of  $\bx^{t+1}_{\ell} = \eta_t(\bs^{t}_\ell)$ to a value in $\{\pm \sqrt{E}\}$. 
We note that the $\PUPE$ defined in \eqref{eq:PUPE_def} is the expected value of the $\UER$.

\begin{thm}[Asymptotic $\UER$ and $\BER$ with i.i.d.\@ design]\label{thm:UER_BER} 
Consider the concatenated scheme with an i.i.d.\@ Gaussian design matrix, with the assumptions in Section \ref{subsec:assumptions}, and the AMP decoding algorithm in \eqref{eq:alg_Zt}--\eqref{eq:alg_Xt} with  Lipschitz continuous denoisers $\eta_t:\reals^d \to \reals^d$, for $t \ge 1$. Let $\hat{\bx}^{t+1}_\ell = h_t(\bs^t_\ell)$ be the hard-decision estimate of $\bx_\ell$  in iteration $t$.
The asymptotic $\UER$ and $\BER$ in iteration $t$    satisfy the following almost surely, for $t \ge 0$:
\begin{align}\label{eq:UER_thm}
     \lim_{L\to \infty} \UER & := 
     \lim_{L\to \infty} \frac{1}{L}\sum_{\ell=1}^L\ind\left\lbrace\hat{\bx}^{t+1}_\ell \neq \bx_\ell\right\rbrace  
      = \prob \left( h_{t}\left(\bar{\bx}+ \bg^{t}\right) \neq \bar{\bx} \right),  \\
    \label{eq:BER_thm} 
   \lim_{L\to \infty} \BER & := 
      \lim_{L\to \infty}\frac{1}{Ld}\sum_{\ell=1}^L\sum_{j=1}^d \ind\left\lbrace \hat{x}^{t+1}_{\ell, j}\neq x_{\ell, j}\right\rbrace   = \frac{1}{d}\sum_{j=1}^d\prob \left( \left[h_{t}\left(\bar{x}+g^{t}\right)\right]_j\neq \bar{x}_{j}\right).
\end{align}
Here $\bar{\bx} \sim P_{\bar{\bx}}$  and $\bg^t\sim \normal_d(\bzero, \bSigma^t)$ are independent, with $\bSigma^t$ defined by the state evolution recursion in \eqref{eq:SE_Sigma_k+1}. The limit is taken as $n,L \to \infty$ with $L/n\to \mu$.
\end{thm}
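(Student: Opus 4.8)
The strategy is to recognize the matrix-valued AMP recursion \eqref{eq:alg_Zt}--\eqref{eq:alg_Xt} as an instance of the abstract AMP iteration covered by the general state evolution results (e.g.\ \cite{javanmard2013state, feng2022unifying}), and then deduce the error-rate limits as a consequence of the SE characterization applied to a pseudo-Lipschitz test function. First I would set up the correspondence: the signal matrix $\bX \in \reals^{L \times d}$ has i.i.d.\ rows drawn from $P_{\bar{\bx}}$, the design $\bA$ has i.i.d.\ $\normal(0,1/\tn)$ entries, and the iteration maps $\bS^t = \bX^t + \bA^\top \bZ^t$ through the row-separable Lipschitz denoiser $\eta_t$. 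The Onsager/memory term $\frac{1}{\tn}\bZ^{t-1}[\sum_\ell \eta_{t-1}'(\bs_\ell^{t-1})]^\top$ is exactly the debiasing term that makes the standard matrix-AMP master theorem applicable; one checks that its coefficient matrix is the empirical average of the Jacobians, as required. The conclusion of that master theorem is that for any pseudo-Lipschitz function $\psi:\reals^d \times \reals^d \to \reals$ of order $2$,
\begin{equation}
\lim_{L\to\infty} \frac{1}{L}\sum_{\ell=1}^L \psi(\bs_\ell^t, \bx_\ell) \;=\; \E\big[\psi(\bar{\bx} + \bg^t, \bar{\bx})\big] \quad \text{a.s.},
\label{eq:SE_master}
\end{equation}
with $\bg^t \sim \normal_d(\bzero, \bSigma^t)$ independent of $\bar{\bx}\sim P_{\bar{\bx}}$, and $\bSigma^t$ given by \eqref{eq:SE_Sigma_k+1}. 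I would also verify that the SE initialization $\bSigma^0 = (\sigma^2 + d\mu E)\bI_d$ is consistent with $\bX^0 = \bzero$ and the prior's second moment $\E[\bar{\bx}\bar{\bx}^\top] = E\bI_d$, together with the noise contribution $\sigma^2 \bI_d$ — this pins down the base case of the recursion.

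The second step passes from \eqref{eq:SE_master} to the stated error rates. The obstacle is that the indicator functions $\ind\{h_t(\bs_\ell^t) \neq \bx_\ell\}$ and $\ind\{[h_t(\bs_\ell^t)]_j \neq x_{\ell,j}\}$ appearing in $\UER$ and $\BER$ are \emph{not} pseudo-Lipschitz — $h_t$ is a hard-decision (quantization) map and the composition with an indicator is discontinuous. I would handle this in the standard way: note that $\bx_\ell$ takes values in the finite set $\{\pm\sqrt{E}\}^d$, so for each codeword $\bv$ in the support of $P_{\bar{\bx}}$ one can write $\ind\{h_t(\bs) \neq \bv\}$ as a sum over the (finitely many) codewords $\bw \neq \bv$ of indicators of the decision regions $\{ \bs : h_t(\bs) = \bw \}$. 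The remaining issue is that $\prob(h_t(\bar{\bx} + \bg^t) \text{ lies on a decision boundary}) = 0$ because $\bg^t$ has a density and the boundaries have Lebesgue measure zero (assuming, as is implicit, that $h_t$ induces a partition with negligible boundary — this holds for coordinatewise sign quantization and for ML/BP hard decisions). Given this, one approximates each decision-region indicator from above and below by Lipschitz functions agreeing with it except on an $\varepsilon$-neighborhood of the boundary, applies \eqref{eq:SE_master} to the Lipschitz sandwich, and lets $\varepsilon \to 0$; the boundary having zero Gaussian measure makes the upper and lower limits coincide. Summing over $\bv$ in the support, weighted by $P_{\bar{\bx}}(\bv) = 2^{-k}$, and using that the empirical fraction of rows equal to $\bv$ converges to $2^{-k}$ by the law of large numbers, yields \eqref{eq:UER_thm}; the coordinatewise version gives \eqref{eq:BER_thm}.

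The main technical point to be careful about — and what I would flag as the crux — is the \emph{zero-measure boundary} argument, i.e.\ justifying that the discontinuity set of $\ind\{h_t(\cdot) \neq \bx_\ell\}$ has $\normal_d(\bzero,\bSigma^t)$-measure zero, which requires $\bSigma^t$ to be nonsingular (so that $\bg^t$ has a density). Nonsingularity of $\bSigma^t$ is immediate from \eqref{eq:SE_Sigma_k+1} since $\bSigma^{t+1} \succeq \sigma^2 \bI_d \succ 0$ for all $t \ge 0$, and $\bSigma^0 = (\sigma^2 + d\mu E)\bI_d \succ 0$, so this is actually automatic. A secondary point is the Lipschitz (as opposed to merely pseudo-Lipschitz) hypothesis on $\eta_t$: this is what lets us invoke the master theorem with the simplest hypotheses and ensures the Jacobians $\eta_t'$ are bounded, so the Onsager coefficient and the SE map are well-behaved; everything else is a routine instantiation of the general AMP machinery with matrix-valued iterates, which is already established in the literature cited in the excerpt.
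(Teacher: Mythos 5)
Your proposal is correct and follows essentially the same route as the paper: invoke the state-evolution master theorem for Lipschitz test functions of $(\bx_\ell,\bs_\ell^t)$, sandwich the (non-Lipschitz) decision-region indicators between $\epsilon$-ramp Lipschitz approximations, and pass to the limit $\epsilon\to 0$ (the paper does this via dominated convergence rather than an explicit zero-measure-boundary argument, but the content is the same, and your observation that $\bSigma^t \succeq \sigma^2 \bI_d$ guarantees a density for $\bg^t$ is exactly what makes that step go through). Your extra decomposition over codewords with a law-of-large-numbers step is unnecessary — the paper simply treats the test function jointly in $(\bx_\ell, \bs_\ell^t)$ — but it is harmless.
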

\begin{proof}
The proof  is given in Section \ref{sec:proof_iid}.
\end{proof}
Theorem \ref{thm:UER_BER} states that in each iteration $t$, the empirical  $\UER$ and $\BER$ of the AMP decoder with a Lipschitz denoiser asymptotically converge to the deterministic quantities on the RHS of \eqref{eq:UER_thm} and \eqref{eq:BER_thm}, which involve the $d$-dimensional SE random vectors $\bbx$ and $\bg^t$.

\section{Choice of AMP denoiser $\eta_t$}\label{sec:iid_denoisers}

\subsection{Bayes-optimal denoiser}
Since the row-wise distribution of  $\bS^t$ converges to the law of $\bar{\bx}+ \bg^{t}$, the Bayes-optimal or minimum mean squared error (MMSE)  denoiser $\eta_t^{\Bayes}$ estimates each row $\bX$ as the following conditional expectation. For $\ell \in [L]$,
\begin{align}
    &\bx^{t+1}_\ell =\eta_t^{\Bayes}(\bs^t_\ell)  =  \E\left[\bar{\bx} \mid \bar{\bx}+ \bg^t= \bs^t_\ell\right] \nonumber\\
    & = \sum_{\bx' \in\mc{X}}\bx'\cdot \frac{\exp\left(-\frac{1}{2} (\bx' - 2\bs_\ell^t)^\top (\bSigma^t)^{-1} \bx'\right)}{\sum_{\tilde{\bx}' \in \mc{X}} \exp\left(-\frac{1}{2} (\tilde{\bx}' - 2\bs_\ell^t)^\top (\bSigma^t)^{-1} \tilde{\bx}'\right)}\label{eq:Bayes_denoiser}
\end{align}
where $\mc{X} \subset \{ \pm \sqrt{E} \}^d$ is the set of $2^k$ codewords.
%
Since $\abs{\mc{X}} = 2^k$, 
the  cost of applying  $\eta_t^{\Bayes}$ is $O(2^kd^3)$ which grows exponentially in  $k$.
In each iteration $t$, the decoder can produce a hard-decision maximum a posteriori (MAP) estimate $\hat{\bx}_\ell^t$ from $\bs_\ell^t$ via:
\begin{align}
    \hat{\bx}_\ell^{t+1} = h_t(\bs_\ell^t) = \argmax_{\bx'\in \mc{X}} \prob\left(\bar{\bx} = \bx' \mid \bar{\bx} + \bg^t =\bs_\ell^t\right).\label{eq:MAP_est}
\end{align}

In Section \ref{sec:iid_numerical} (Fig.\@ \ref{fig:hamming})  we present  numerical results illustrating the performance of AMP with denoiser $\eta_t^{\Bayes}$ for a Hamming code with $d=7$ and $k=4$.
In practical scenarios where $d$ is of the order of several hundreds or thousands, applying $\eta_t^{\Bayes}$ is not feasible, motivating the use of sub-optimal denoisers with lower  computational cost.

\subsection{Marginal-MMSE denoiser}
 A computationally efficient alternative to the Bayes-optimal denoiser is the marginal-MMSE denoiser \cite{greig2018techniques, fengler2021SPARCS} which
acts   entry-wise on  $\bs_\ell^t$ and 
returns the entry-wise conditional expectation:
\begin{align}
    \bx_{\ell}^{t+1} = \eta_{t}^{\marginal}(\bs_\ell^t) & =
    \begin{bmatrix}
    \E[\bar{x}_1 \mid \bar{x}_1 + g^t_1 = s_{\ell,1}^t]\\
    \vdots\\
        \E[\bar{x}_d \mid \bar{x}_d + g^t_d = s_{\ell,d}^t] 
    \end{bmatrix},  \qquad \text{where}\nonumber\\
    \E[\bar{x}_j \mid  \bar{x}_j + g^t_j = s_{\ell,j}^t]& \stackrel{\text{(i)}}{=} \sqrt{E} \tanh 
    \left( \sqrt{E}s_{\ell,j}^t/\Sigma_{j,j}^t\right)\qquad \text{for }j\in[d]. \label{eq:marginal_MMSE_exp}
\end{align}
The equality (i) follows from $g^t_j \sim \normal(0, \Sigma^t_{j,j})$ and $p(\bar{x}_j=\sqrt{E}) =p(\bar{x}_j=-\sqrt{E})  =\frac{1}{2} $ due to the linearity of the outer  code.  A hard decision estimate $\hat{
\bx}_{\ell}^{t+1}$ can be obtained by quantizing each entry of $\bx_{\ell}^{t+1}$ to $\{ \pm \sqrt{E} \}$.

This marginal denoiser has an $O(d)$ computational cost which is linear in $d$, but it  ignores the parity structure of $\bar{\bx}$, which is useful prior knowledge that can help reconstruction. One way to address this is by using the output of the AMP decoder as input to a channel decoder for the outer code, as in \cite{fengler2021SPARCS, cao21sparclist}. 
In the next subsection, we show how to improve on this approach. Considering an outer LDPC code, we use an AMP denoiser that fully integrates BP decoding. 

 \subsection{Belief Propagation (BP) denoiser} \label{subsec:BPdenoiser}
 Assume that the binary code used to define the  concatenated scheme is a low-density parity check (LDPC) code. 
An LDPC code is a linear code characterized by a sparse parity-check matrix, where each parity-check equation involves only a small number of code  bits relative to the codelength. LDPC codes are typically decoded using Belief Propagation (BP), an iterative message passing algorithm that operates on the sparse factor graph  corresponding to the parity check matrix.  The factor graph is a bipartite graph with two sets of nodes: variable nodes (representing the code bits) and check nodes (representing the parity-check equations). The BP decoder iteratively passes messages between these two sets of nodes to compute updated probabilistic estimates on the value of each code bit.  For a detailed discussion of LDPC codes and BP decoding, see \cite{richardson_urbanke_2008}.

We propose a BP denoiser $\eta_t^{\BP}$ for the AMP decoder which exploits the parity structure of the  LDPC code in each AMP iteration by performing a few rounds of BP  on the associated factor graph. Like the other denoisers above, $\eta_{t}^{\BP}$ acts row-wise on the effective observation $\bS^t \in \reals^{L \times d}$. For $\ell \in [L]$, it  produces the updated AMP estimate $\bx_{\ell}^{t+1}$ from $\bs_\ell^t$  as follows, using $\mc{R}$ rounds of BP. 

1)  For each variable node $j\in [d]$ and check node $i\in [d-k]$, initialize variable-to-check messages (in log-likelihood ratio format) as:
\begin{align}
    L_{j\to i}^{(0)}=\ln\left[\frac{p(s_{\ell,j}^t  \mid x_{\ell,j} =+\sqrt{E})}{p(s_{\ell,j}^t  \mid x_{\ell,j} =-\sqrt{E})}\right] 
    = \frac{2\sqrt{E}s_{\ell,j}^t}{\Sigma^t_{j,j}}  =: L(s_{\ell,j}^t).
     \label{eq:initial_LLR}
\end{align}
This initialization follows the distributional assumption $s_{\ell,j}^t \stackrel{\text{d}}{=}\bar{x}_j + g^t_j$, where $g^t_j \sim \normal(0, \Sigma^t_{j,j})$ and $p(\bar{x}_j=\sqrt{E}) =p(\bar{x}_j=-\sqrt{E})  =\frac{1}{2} $. Note that, similar to the marginal-MMSE denoiser in \eqref{eq:marginal_MMSE_exp}, only the diagonal entries of the covariance matrix $\bSigma^t$, $\Sigma^t_{j,j}$ for $j\in[d]$,  are used in \eqref{eq:initial_LLR}. 

2) Let $N(i)$ denote the set of neighbouring nodes of node $i.$ For round $1 \le r \le \mc{R}$, compute the check-to-variable  and variable-to-check messages, denoted by $L_{ i\to j}^{(r)}$ and $L_{j\to i}^{(r)}$, as:
\begin{align}
    L_{ i\to j}^{(r)} & = 2\tanh^{-1}\left[\prod_{j'\in N(i)\setminus j} \tanh \left(\frac{1}{2}L_{j'\to i}^{(r -1)} \right)\right]\,, \label{eq:check_to_var} \\
    L_{j\to i}^{(r)} & = L(s_{\ell,j}^t) + \sum_{i'\in N(j)\setminus i}L_{i'\to j}^{(r)}\,.\label{eq:var_to_check}
\end{align}

3) Terminate BP after $\mc{R}$ rounds by computing the final log-likelihood ratio for each variable node $j \in [d]$: 
 \begin{align}
 L_{j}^{( \mc{R} )}   = L(s_{\ell,j}^t) + 
 \sum_{i\in N(j)}  L_{i\to j}^{(\mc{R})}.
\label{eq:LLR_last_step}  
\end{align}
Equations \eqref{eq:check_to_var}--\eqref{eq:LLR_last_step}  are the standard BP updates for an LDPC code \cite{richardson_urbanke_2008}. 

4) Compute the updated AMP estimate  $ \bx_{\ell}^{t+1} = \eta_t^{\BP}(\bs_\ell^t)$, where for $j\in[d]$, 
\begin{align}
    \left[\eta_t^{\BP}(\bs_\ell^t)\right]_j &= \sqrt{E}\ \frac{\exp\big(L_{j}^{(\mc{R})}\big)}{1+\exp\big(L_{j}^{( \mc{R} )} \big)} -\sqrt{E}\ \frac{1}{1+\exp\big(L_{j}^{( \mc{R} )} \big)}
    = \sqrt{E}\, 
    \tanh\big(L_{j}^{(\mc{R})}/2 \big).
    \label{eq:etat_BP_exp}
\end{align}
The RHS above is obtained by converting the final log-likelihood ratio \eqref{eq:LLR_last_step} to a conditional expectation, recalling that $x_{\ell, j}$ takes values in $\{ \pm \sqrt{E} \}$. Following the standard interpretation of BP as approximating the bit-wise marginal posterior probabilities \cite{richardson_urbanke_2008}, the expression  in \eqref{eq:etat_BP_exp} can be viewed as an approximation to $$\E[\bar{x}_j \mid  \bar{x}_j + g_j^t = s_{\ell, j}^t \, , \,  \text{parities specified by $N(j)$ 
 are  satisfied}].$$
We highlight the contrast between the  conditional expectation above and the one in \eqref{eq:marginal_MMSE_exp}, which does not use the parity-check constraints. As with the marginal-MMSE denoiser, a hard-decision estimate $\hat{
\bx}_{\ell}^{t+1}$ can be obtained by quantizing each entry of $\bx_{\ell}^{t+1}$ to $\{ \pm \sqrt{E} \}$. The computational cost of $\eta^{\BP}_t$ is $O(d\mc{R})$ which is linear in $d$.

\paragraph{Computing the derivative of $\eta_t$}
While the derivative  $\eta_t'$  for the memory term can be easily calculated  for $\eta_t^{\Bayes}$ and $\eta_t^{\marginal}$ via direct differentiation, the derivative for $\eta_t^{\BP}$ is less obvious because  it involves $\mc{R}$ rounds of BP updates \eqref{eq:check_to_var}--\eqref{eq:LLR_last_step}.  Nevertheless, 
using the approach in \cite{amalladinne2022unsourced,ebert2025sparse}, the derivative can be derived in closed form and computed efficiently, provided the number of BP rounds $\mc{R}$ is less than the girth of the LDPC factor graph.
\begin{lem}[Jacobian of $\eta_t^\BP$]
\label{lem:etaBP_deriv}
For $t \ge 1$, consider the AMP decoder with denoiser $\eta_{t}^{\BP}: \reals^d \to \reals^d$,  where  BP is performed for fewer rounds than the girth of the LDPC factor graph.
Let
\begin{align}
\bD:= \frac{\de \eta_{t}^{\BP} (\bs_\ell^t)}{\de \bs_\ell^t}\in\reals^{d\times d}, \ \ \text{ for }  \bs_\ell^t \in \reals^d.
\end{align}
Then  for $j,j_1\in [d]$ and $j\neq j_1,$
\begin{align}
   D_{j,j}  &=  
   \frac{1}{\Sigma^t_{j,j}}\left(E-\left[\eta_t^{\BP}(s_\ell^t)\right]_j^2\right), \label{eq:Jacobian_diag_entries}\\
   D_{j,j_1} &=   \frac{1}{\Sigma^t_{j_1,j_1}}\left(E-\left[\eta_t^{\BP}(s_\ell^t)\right]_j^2\right) \cdot C_{j, j_1},\label{eq:Jacobian_off_diag_entries}
\end{align}
where  $C_{j, j_1}\in (-1,1]$ is a constant that depends on the shortest path between the variable nodes $j$ and $j_1$ on the computation graph of $\mc{R}$ rounds of BP. If there is no path between  nodes $j$ and $j_1$, then $C_{j, j_1} = 0$.
\end{lem}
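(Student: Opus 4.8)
The plan is to differentiate the output formula \eqref{eq:etat_BP_exp} through the BP recursion \eqref{eq:initial_LLR}--\eqref{eq:LLR_last_step} by the chain rule, using the fact that when $\mc{R}$ is below the girth the BP computation graph rooted at any variable node is a tree, so that each input coordinate $s^t_{\ell, j_1}$ reaches the output coordinate $[\eta^{\BP}_t(\bs^t_\ell)]_j$ through at most one path. First I would peel off the outer map: since $[\eta^{\BP}_t(\bs^t_\ell)]_j = \sqrt{E}\,\tanh(L^{(\mc{R})}_j/2)$ and $s^t_{\ell, j_1}$ enters the recursion only through the initial LLR $L(s^t_{\ell, j_1}) = 2\sqrt{E}\, s^t_{\ell, j_1}/\Sigma^t_{j_1, j_1}$ at node $j_1$, the chain rule gives
\begin{align}
D_{j, j_1} \;=\; \frac{\sqrt{E}}{2}\Big(1 - \tanh^2\!\big(L^{(\mc{R})}_j/2\big)\Big)\,\frac{\de L^{(\mc{R})}_j}{\de s^t_{\ell, j_1}} \;=\; \frac{E - \big[\eta^{\BP}_t(\bs^t_\ell)\big]_j^2}{\Sigma^t_{j_1, j_1}}\cdot\frac{\de L^{(\mc{R})}_j}{\de L(s^t_{\ell, j_1})}. \nonumber
\end{align}
It then remains to show that $\de L^{(\mc{R})}_j/\de L(s^t_{\ell, j_1})$ equals $1$ when $j_1 = j$, and equals a constant $C_{j, j_1} \in (-1, 1]$ (with $C_{j, j_1} = 0$ when $j_1$ is not in the computation graph of $j$) when $j_1 \neq j$; substituting reproduces \eqref{eq:Jacobian_diag_entries}--\eqref{eq:Jacobian_off_diag_entries}.

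Next I would invoke the tree structure. When BP runs for fewer rounds than the girth, the depth-$\mc{R}$ computation graph of $L^{(\mc{R})}_j$ is a tree in which no Tanner-graph node repeats. Two consequences: (i) if $j_1 = j$, each check-to-variable message $L^{(\mc{R})}_{i \to j}$ in \eqref{eq:LLR_last_step} is a function of initial LLRs of variable nodes other than $j$ (BP messages are extrinsic), so $L^{(\mc{R})}_j$ depends on $L(s^t_{\ell, j})$ only through the direct summand, giving derivative $1$ and hence \eqref{eq:Jacobian_diag_entries}; (ii) if $j_1 \neq j$ and $j_1$ does not appear in the tree, $L^{(\mc{R})}_j$ is independent of $L(s^t_{\ell, j_1})$, so $C_{j, j_1} = 0$. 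In the remaining case there is a unique (hence shortest) alternating path $v_0, c_1, v_1, \ldots, c_p, v_p$ with $v_0 = j_1$, $v_p = j$ and $1 \le p \le \mc{R}$, and $\de L^{(\mc{R})}_j/\de L(s^t_{\ell, j_1})$ factorizes into the product of local derivatives of the BP updates along this path.

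For the local derivatives I would differentiate \eqref{eq:var_to_check} and \eqref{eq:check_to_var}: at each variable node on the path the incoming message enters the outgoing message with coefficient $1$, and by the tree property no other summand carries a dependence on $L(s^t_{\ell, j_1})$, so every variable-node factor (including the final step \eqref{eq:LLR_last_step}) is $1$; at each check node $c_k$, writing the product in \eqref{eq:check_to_var} as $\tau_k Q_k$ with $\tau_k = \tanh(L_{v_{k-1} \to c_k}/2)$ and $Q_k = \prod_{j' \in N(c_k) \setminus \{v_k, v_{k-1}\}} \tanh(L_{j' \to c_k}/2)$, one gets
\begin{align}
\frac{\de L_{c_k \to v_k}}{\de L_{v_{k-1} \to c_k}} \;=\; \frac{2}{1 - \tau_k^2 Q_k^2}\cdot Q_k\cdot\tfrac{1}{2}\big(1 - \tau_k^2\big) \;=\; \frac{Q_k\,(1 - \tau_k^2)}{1 - \tau_k^2 Q_k^2}. \nonumber
\end{align}
Since $\bs^t_\ell$ is finite and $\Sigma^t_{j,j} \ge \sigma^2 > 0$, all intermediate LLRs are finite, so $|\tau_k| < 1$ and $|Q_k| \le 1$; then $|Q_k|(1 - \tau_k^2) \le 1 - \tau_k^2 \le 1 - \tau_k^2 Q_k^2$, and the value $-1$ is excluded because it would force $Q_k = -1$ or $\tau_k^2 Q_k = 1$. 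Hence each check-node factor lies in $(-1, 1]$, and so does their product $C_{j, j_1} = \prod_{k=1}^{p} Q_k(1 - \tau_k^2)/(1 - \tau_k^2 Q_k^2)$, which is determined by the messages on the unique path between $j_1$ and $j$. Plugging this into the first display finishes the argument.

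The hard part will be the tree-structure step: one needs to argue carefully that, below the girth, the BP computation graph is a genuine tree so that the extrinsic messages never reintroduce $L(s^t_{\ell, j})$ into $L^{(\mc{R})}_{i \to j}$ and so that every other initial LLR reaches $L^{(\mc{R})}_j$ along a single path. After that, the chain-rule bookkeeping and the index chasing through the BP rounds are routine; the only remaining delicacy is confirming the scalar factor is strictly above $-1$ rather than merely in $[-1, 1]$, which relies on the finiteness of all intermediate LLRs.
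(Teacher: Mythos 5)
Your proposal is correct and follows essentially the same route as the paper's proof: peel off the outer $\tanh$ map, use the fact that $\mc{R}$ below the girth makes the computation graph a tree (so messages are extrinsic and any dependence path is unique), and multiply the local derivatives along the path, with the check-node factor $Q_k(1-\tau_k^2)/(1-\tau_k^2Q_k^2)$ matching the paper's $C_{j'\to i'\to j''}$ exactly and the same $(-1,1]$ bound. No substantive differences to report.
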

\begin{proof}
\begin{figure}[t!]
    \centering
\includegraphics[width=0.45\textwidth]{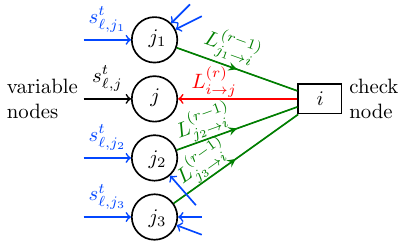}
    \caption{Message {\color{red}$L_{i\to j}^{(r)}$} depends on {\color{green}$\{L_{j' \to i}^{(r-1)}\}$} which in turn depend on {\color{myblue}$\{s_{\ell, j'}^t\}$} and other unlabelled  messages in blue, excluding $s_{\ell, j}^t$. The unlabelled  messages in blue are independent of $s_{\ell, j}^t$ since BP is performed for fewer rounds than girth of the graph.} \label{fig:bipartite}
    \end{figure}
 We will use $j$ and $i$ with subscripts or superscripts to denote variable nodes and check nodes, respectively.     With $\mc{R}$ denoting the number of BP rounds, from \eqref{eq:etat_BP_exp} we have 
    \begin{align}
        D_{j,j}= \frac{\de \left[\eta_{t}^{\BP} (s_\ell^t)\right]_j}{\de s_{\ell, j}^t} 
        &= \frac{\de }{\de L_{j}^{(\mc{R})}}\left(\sqrt{E} \tanh\left(L_j^{(\mc{R})}/2\right) \right) \cdot \frac{\de L_{j}^{(\mc{R})} }{\de s^t_{\ell,j}}\nonumber\\
        & \stackrel{\text{(i)}}{=} \frac{\sqrt{E}}{2}\left(1-\tanh^2\left( L_j^{(\mc{R})}/2\right)\right)\cdot \frac{2\sqrt{E}}{\Sigma^t_{j,j}}\nonumber\\
        & = \left(E-\left[\eta_t^{\BP}(s_\ell^t)\right]_j^2\right) \cdot \frac{1}{\Sigma^t_{j,j}}
    \end{align}
    where (i) uses the fact that
    \begin{equation}
        \frac{\de L_{j}^{(\mc{R})}}{\de s^t_{\ell,j}}  
        \stackrel{\text{(ii)}}{=} \frac{\de L(s_{\ell,j}^t)}{\de s_{\ell,j}^t} \stackrel{\text{(iii)}}{=} \frac{2\sqrt{E}}{\Sigma^t_{j,j}}\,.
        \end{equation}
    The equality (iii) follows from \eqref{eq:initial_LLR}, and (ii) uses  \eqref{eq:LLR_last_step}, noting that each summand $L_{i\to j}^{(\mc{R})}$ is a function of the extrinsic messages $\{L_{j'\to i}^{(\mc{R}-1)} \ \text{for}\ j'\in N(i)\setminus j\}$.  The key observation is that the messages $\{L_{j'\to i}^{(\mc{R}-1)} \ \text{for}\ j'\in N(i)\setminus j\}$ do not depend on  $s_{\ell, j}^t$, since $\mc{R}$ is smaller than the girth of the graph. We illustrate this in Fig.\@ \ref{fig:bipartite}.
    Thus, none of the summands $ L_{i\to j}^{(\mc{R})}$ depends on $s_{\ell,j}^t$ resulting in $\frac{\de }{\de s_{\ell,j}^t}\sum_{i\in N(j)} L_{i\to j}^{(\mc{R})}=0.$  This completes the proof of \eqref{eq:Jacobian_diag_entries}.

    To prove \eqref{eq:Jacobian_off_diag_entries}, we first note that for $j\neq j_1$,
    \begin{align}
        D_{j,j_1}= \frac{\de \left[\eta_t^{\BP}(s_\ell^t)\right]_j}{\de s_{\ell, j_1}^t}
        &= \frac{\sqrt{E} }{2}\left( 1-\tanh^2\left(L_j^{(\mc{R})}/2\right)\right) \cdot \frac{\de L_{j}^{(\mc{R})} }{\de s^t_{\ell,j_1}}.\label{eq:D_j_j1_chain_rule}
    \end{align}
    Consider the computation graph that represents $\mc{R}$ rounds of  BP computation. 
If there is no  path in the graph from variable  node $j$ to variable node $j_1$, then $\frac{\de L_j^{(\mc{R})}}{\de s_{\ell, j_1}^t}=0$ in \eqref{eq:D_j_j1_chain_rule}, leading to  $D_{j, j_1}=0$. Conversely, if  a path does exist from $j$ to $j_1$, then  $\frac{\de L_j^{(\mc{R})}}{\de s^t_{\ell,j_1}}\neq0$ and its value can be computed using the chain rule. To do so, we first note that  \eqref{eq:check_to_var}--\eqref{eq:var_to_check} imply that for any path $j'\to i'\to j''$ in the computation graph,  
\begin{align}
  C_{j'\to i'\to j''}:= \frac{\de L_{{i'}\to j''}^{(r)}}{\de L_{j'\to i'}^{(r-1)}} = \frac{c_{j'\to i'\to j''} \left[1-\tanh^2\left(L_{j'\to i'}^{(r-1)}/2\right)\right] }{1-c_{j'\to i'\to j''}^2\tanh^2\left(L_{j'\to i'}^{(r-1)}/2\right)},\label{eq:chain_rule_vcv}
\end{align}
where 
$c_{j'\to i'\to j''}=\prod_{\tj\in N(i')\setminus \{j', j''\}} \tanh \left( L_{\tj\to i'}^{(r-1)}/2\right).$ Since $\tanh u\in (-1,1)$ for any $u\in (-\infty, \infty)$, we have that  $c_{j'\to i'\to j''}\in (-1, 1]$ and so $C_{j'\to i'\to j''} \in(-1, 1]$. Similarly, for any path $i'\to j'\to i'' $ in the computation graph, we have that
\begin{align}
 C_{i' \to j'\to i''}:= \frac{\de L_{j'\to i''}^{(r)}}{\de L_{i'\to j'}^{(r)}}  = 1.\label{eq:chain_rule_cvc}
\end{align}
Consider the general case where the path between $j_1$ and $j$ in the computation graph takes the form  $j_1\to i_1\to j_2 \to  i_2\to\dots \to  j_r \to  i_r\to  j$ with $r\le \mc{R}$. There is at most one such path since $\mathcal{R}$ is smaller than the girth of the graph. 
Then  applying \eqref{eq:chain_rule_vcv}--\eqref{eq:chain_rule_cvc} combined with the chain rule allows us to evaluate $\frac{\de L_j^{(\mc{R})}}{\de s^t_{\ell,j_1}}$ in \eqref{eq:D_j_j1_chain_rule} as follows:
    \begin{align}
        \frac{\de L_j^{(\mc{R})}}{\de s^t_{\ell,j_1}}  &
        \stackrel{\text{(i)}}{=}  \frac{\de L_{{i_r}\to j}^{(\mc{R})}}{\de s^t_{\ell,j_1}}\nonumber\\
       & =\frac{\de L_{{i_r}\to j}^{(\mc{R})}}{\de L_{j_r\to i_r}^{(\mc{R}-1)}} \cdot\frac{\de L_{j_r\to i_r}^{(\mc{R}-1)}}{\de L^{(\mc{R}-1)}_{i_{r-1}\to j_r}} \cdot\, \dots\, \cdot \frac{\de L_{i_1\to j_2}^{(\mc{R}-r+1)}}{\de L_{j_1\to i_1}^{(\mc{R}-r)}} \cdot \frac{\de L_{j_1\to i_1}^{(\mc{R}-r)}}{\de s^t_{\ell,j_1}}\nonumber\\
        &=  C_{j_r\to i_r\to j}\cdot  C_{i_{r-1}\to j_r\to i_r}\cdot \, \dots \, \cdot C_{j_1\to i_1\to j_2}\cdot \frac{2 \sqrt{E}}{\Sigma_{j_1, j_1}^t}\nonumber\\
        & \stackrel{\text{(ii)}}{=} C_{j, j_1}  \cdot  \frac{2 \sqrt{E}}{\Sigma_{j_1, j_1}^t}, \label{eq:D_j_j1_chain_rule_1}
    \end{align}
    where (i) used \eqref{eq:LLR_last_step}, and (ii) applied $C_{j, j_1}:= C_{j_r\to i_r\to j}C_{i_{r-1}\to j_r\to i_r}\dots C_{j_1 \to i_1 \to j_2} \in (-1, 1]$.  
 Substituting  \eqref{eq:D_j_j1_chain_rule_1}  back into \eqref{eq:D_j_j1_chain_rule}  gives 
 \begin{align}
    D_{j, j_1} = \frac{1}{\Sigma^t_{j_1,j_1}} \left(E-\left[\eta_t^{\BP}(s_\ell^t)\right]_j^2\right) \cdot C_{j, j_1}.
 \end{align}
 This concludes the proof of \eqref{eq:Jacobian_off_diag_entries}.
\end{proof}
Lemma \ref{lem:etaBP_deriv} implies that the characterization of the limiting $\UER$ and $\BER$ in Theorem \ref{thm:UER_BER}  holds for AMP decoding with $\eta^{\BP}_t$ as the denoiser. Formally, we have the following corollary.

\begin{cor} \label{corr:SEvalidity}
The asymptotic guarantees  in Theorem \ref{thm:UER_BER} hold for the AMP decoder with any of the three denoisers: $\eta^{\Bayes}_t, \eta^{\marginal}_t$, and for $\eta^{\BP}_t$ assuming that the number of BP rounds is less than the girth of the LDPC factor graph.
\end{cor}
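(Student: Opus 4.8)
The plan is to derive the corollary directly from Theorem~\ref{thm:UER_BER}, which already establishes the claimed asymptotic $\UER$ and $\BER$ for AMP run with \emph{any} Lipschitz-continuous denoiser in each iteration. Hence it suffices to verify that each of $\eta_t^{\Bayes}$, $\eta_t^{\marginal}$ and $\eta_t^{\BP}$ is Lipschitz on $\reals^d$, with a constant that may depend on the fixed quantities $t$, $d$, $E$, $\sigma^2$. A preliminary observation used throughout: from the SE recursion \eqref{eq:SE_Sigma_k+1}, the matrix $\bSigma^t - \sigma^2\bI_d$ is positive semidefinite for every $t\ge 1$, so $\Sigma^t_{j,j}\ge \sigma^2>0$; moreover $\Sigma^t_{j,j}$ is bounded above since $\bar{\bx}$ and each denoiser output lie in a fixed bounded set. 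Consequently all the factors $1/\Sigma^t_{j,j}$ and the bound $\|(\bSigma^t)^{-1}\|_{\text{op}}\le\sigma^{-2}$ appearing below are finite.

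The two entrywise denoisers are immediate. By \eqref{eq:marginal_MMSE_exp}, each coordinate of $\eta_t^{\marginal}$ is the map $s\mapsto \sqrt{E}\,\tanh(\sqrt{E}\,s/\Sigma^t_{j,j})$, whose derivative has absolute value at most $E/\Sigma^t_{j,j}\le E/\sigma^2$; since the map acts coordinatewise, $\eta_t^{\marginal}$ is Lipschitz with constant $E/\sigma^2$. For $\eta_t^{\Bayes}$, the expression \eqref{eq:Bayes_denoiser} is a smooth (softmax-type) function of $\bs_\ell^t$ equal to the posterior mean $\E[\bar{\bx}\mid \bar{\bx}+\bg^t=\bs_\ell^t]$; differentiating gives Jacobian $(\bSigma^t)^{-1}\,\mathrm{Cov}(\bar{\bx}\mid \bar{\bx}+\bg^t=\bs_\ell^t)$, whose operator norm is at most $\sigma^{-2}$ times the (bounded) operator norm of the conditional covariance of $\bar{\bx}\in\{\pm\sqrt{E}\}^d$. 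Hence $\eta_t^{\Bayes}$ is Lipschitz.

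The BP denoiser is the only nontrivial case, and here we invoke Lemma~\ref{lem:etaBP_deriv}. First, $\eta_t^{\BP}$ is smooth on $\reals^d$: the only operation that could fail to be $C^\infty$ is the $\tanh^{-1}$ in the check-to-variable update \eqref{eq:check_to_var}, but its argument $\prod_{j'\in N(i)\setminus j}\tanh\!\big(\frac{1}{2} L_{j'\to i}^{(r-1)}\big)$ has modulus strictly less than $1$ for finite incoming messages (since $|\tanh u|<1$ for all finite $u$), so $\tanh^{-1}$ is always evaluated in the open interval $(-1,1)$ and the whole composition is smooth. Second, Lemma~\ref{lem:etaBP_deriv} (valid because $\mc{R}$ is below the girth of the factor graph) gives the Jacobian $\bD=\de\eta_t^{\BP}(\bs_\ell^t)/\de\bs_\ell^t$ in closed form, and each of its entries is bounded uniformly in $\bs_\ell^t$: using $[\eta_t^{\BP}(\bs_\ell^t)]_j^2\in[0,E]$ and $C_{j,j_1}\in(-1,1]$, we get $|D_{j,j}|\le E/\Sigma^t_{j,j}\le E/\sigma^2$ and $|D_{j,j_1}|\le (E/\sigma^2)\,|C_{j,j_1}|\le E/\sigma^2$. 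Since $d$ is fixed, $\|\bD\|_{\text{op}}\le\|\bD\|_{\text{F}}\le d\,E/\sigma^2$ for all $\bs_\ell^t\in\reals^d$, so by the mean-value inequality $\eta_t^{\BP}$ is Lipschitz with constant $d\,E/\sigma^2$. Applying Theorem~\ref{thm:UER_BER} with each of the three denoisers then yields the corollary.

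The main obstacle is the $\eta_t^{\BP}$ case: one needs the closed-form Jacobian of Lemma~\ref{lem:etaBP_deriv} together with the lower bound $\Sigma^t_{j,j}\ge\sigma^2$ to control the $1/\Sigma^t_{j,j}$ factors, plus the elementary smoothness check that rules out a singularity of $\tanh^{-1}$. Everything else reduces to bounding $\tanh'$ and a posterior covariance matrix, which is routine.
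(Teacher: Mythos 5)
Your proposal is correct and follows essentially the same route as the paper: the paper likewise reduces the corollary to checking boundedness of the derivatives, disposing of $\eta_t^{\Bayes}$ and $\eta_t^{\marginal}$ by direct differentiation and handling $\eta_t^{\BP}$ via the closed-form Jacobian of Lemma~\ref{lem:etaBP_deriv} together with the bounds $\Sigma^t_{j,j} > \sigma^2$ and $[\eta_t^{\BP}(\bs_\ell^t)]_j^2 < E$. Your additional details (the explicit Lipschitz constants, the smoothness check that $\tanh^{-1}$ is always evaluated inside $(-1,1)$) are sound elaborations of the same argument.
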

\begin{proof}
    It can be verified by direct differentiation that the derivatives of 
 $\eta^{\Bayes}_t$ and $\eta^{\marginal}_t$ are bounded. For $\eta_t^{\BP}$, we only need to show that $D_{j,j}$ and $D_{j, j_1}$ in \eqref{eq:Jacobian_diag_entries}--\eqref{eq:Jacobian_off_diag_entries} are  bounded for $j, j_1 \in [d]$ and $j\neq j_1$. This follows by observing that  $\Sigma^t_{j,j} > \sigma^2$ (from \eqref{eq:SE_Sigma_k+1}) and  $[\eta_t^{\BP}(\bs_\ell^t)]_j^2 <E$ (from \eqref{eq:etat_BP_exp}) for any $j\in [d]$.
\end{proof}

\subsection{Numerical results for i.i.d.\@ design} 
\label{sec:iid_numerical}

In this section, we numerically evaluate the tradeoffs achieved by the concatenated coding scheme with an i.i.d.\@ design  using different denoisers. 
For a target $\BER=10^{-4}$, we plot  the maximum spectral efficiency $\S=Lk/n = (L/\tn)(k/d)$  achievable as a function of signal-to-noise ratio $E_b/N_0 = (Ed/k)/(2\sigma^2) $.  We use $\BER$ rather than $\UER$ since the $\UER$ of an uncoded scheme degrades approximately linearly with $d$. For each setting, we also plot the converse bounds from \cite{zadik2019improved}, and the  achievability bounds from either \cite{zadik2019improved} or \cite{hsieh2022near} depending on which one yields the larger achievable region without causing computational issues. These bounds can be used to obtain upper and lower bounds on the  maximum spectral efficiency achievable for given values of $E_b/N_0$  and  $\PUPE$.  
To adapt these bounds to target $\BER$ (rather than target $\PUPE$), we use the random coding assumption that when a  codeword is decoded incorrectly, approximately half of its bits are in error, i.e., $\E[\BER]=\frac{1}{2}\PUPE$.

In Figs.\@ \ref{fig:hamming}--\ref{fig:ldpc_diff_rates}, `SE' refers to curves obtained by using the state evolution result of Theorem  \ref{thm:UER_BER}, and `AMP' (indicated by crosses) refers to points obtained via simulation.  To simplify our implementation, we set the off-diagonal entries of the Jacobian matrix \(\eta_t'(\bs_\ell^{t})\) in \eqref{eq:alg_Zt} to zero. This approximation is justified by Lemma \ref{lem:etaBP_deriv}, which indicates that while the off-diagonal entries of the Jacobian follow a similar form to the diagonal ones, they are scaled by a factor less than one in absolute value. Empirical observations also confirm that the off-diagonal entries are typically much smaller than the diagonal entries. Fig.\@ \ref{fig:hamming} compares the uncoded scheme (i.e., $d=1$) with the concatenated scheme with a $(7,4)$ Hamming code, decoded using AMP with Bayes-optimal denoiser $\eta_t^{\Bayes}$. Even this simple code provides a savings of over 1dB in the minimum $E_b/N_0$ required to achieve positive spectral efficiency, compared to the uncoded scheme as well as the converse bound for $k=1$.

\begin{figure}[t!]
    \centering
    \includegraphics[width=0.7\linewidth]{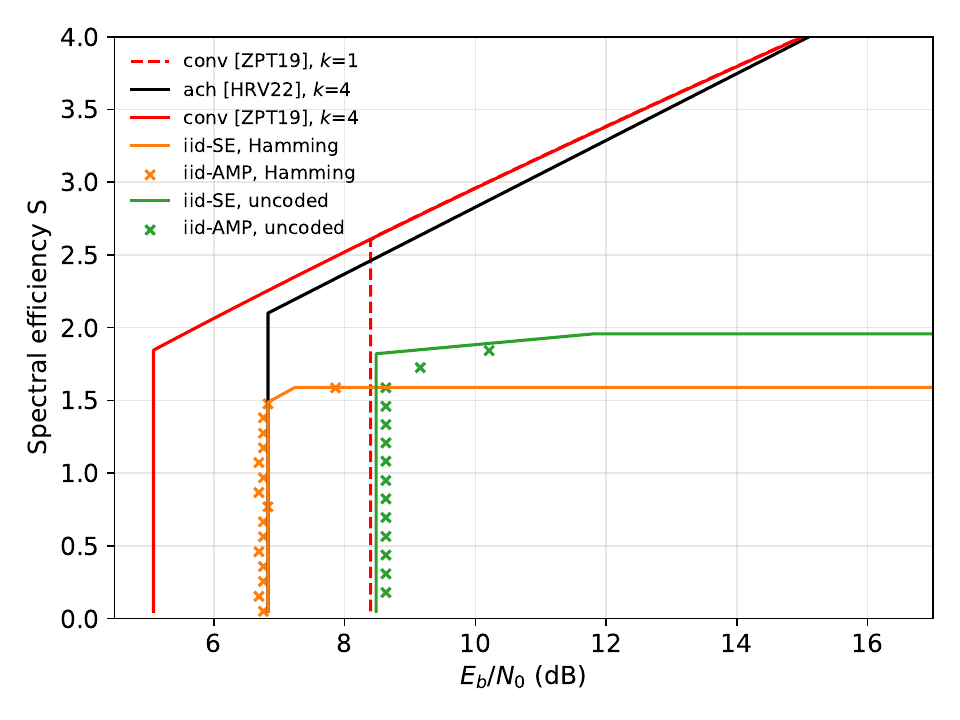}
    \vspace{-0.5cm}
    \caption{Comparison of the uncoded scheme and the concatenated scheme with $(7,4)$ Hamming outer code and  denoiser $\eta_t^\Bayes$. $L=20000.$ }
    \label{fig:hamming} 
    \vspace{-5pt}
\end{figure}

Figs.\@ \ref{fig:bp_v_marginal} and \ref{fig:ldpc_diff_rates} employ LDPC codes  from the IEEE 802.11n  standards as outer codes (with codelength $d=720$ bits).  Fig.\@ \ref{fig:bp_v_marginal} considers a user payload of $k=360$ bits and  compares the decoding performance of AMP with different denoisers: the marginal-MMSE $\eta_t^\marginal$ or the BP denoiser $\eta_t^\BP$ which executes 5 rounds of BP per AMP denoising step. The latter outperforms the former by around 7.5dB since $\eta_t^\marginal$ does not use the parity constraints of the code. The dotted orange curve  in Fig.\@ \ref{fig:bp_v_marginal} shows that the performance of AMP with $\eta_t^\marginal$ is substantially improved by running a BP decoder (200 rounds) after AMP has converged. This additional BP decoding at the end also improves the performance of $\eta_t^\BP$ (blue dotted curve). We observe that the achievable spectral efficiency with $\eta_t^\BP +$ BP is consistently about $40\%$ higher than with $\eta_t^\marginal+$ BP.

\begin{figure}[t!]
    \centering
    \includegraphics[width=0.7\linewidth]{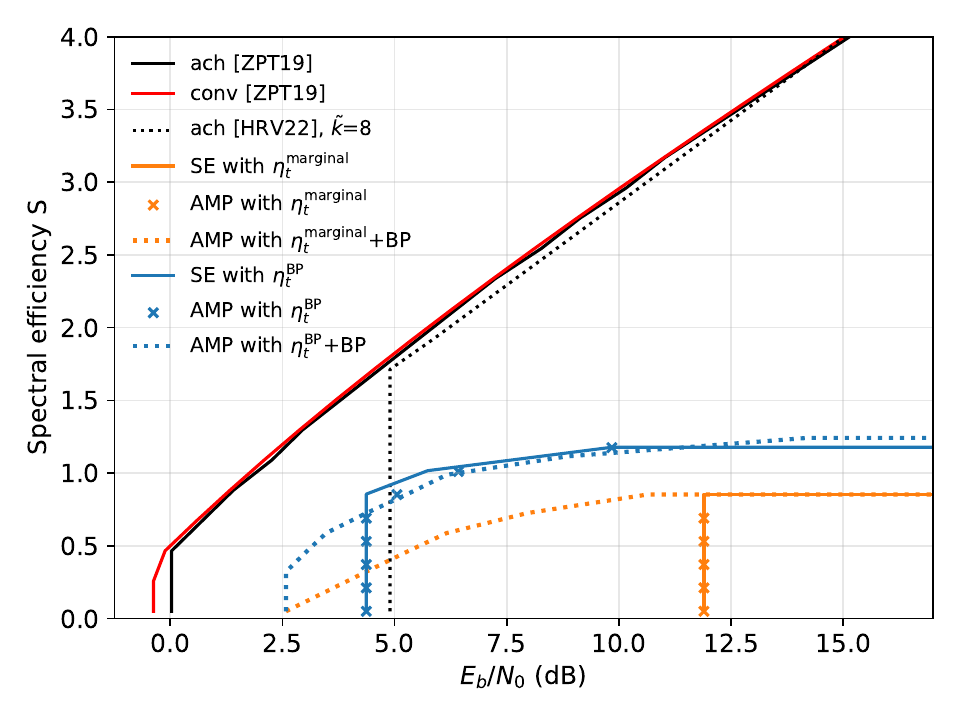}
    \vspace{-0.5cm}
    \caption{Comparison of marginal-MMSE denoiser $\eta_t^{\marginal}$ (orange) and BP denoiser $\eta_t^{\BP}$ (blue) for decoding LDPC outer code (with fixed rate 1/2).  The dotted orange and blue plots correspond to AMP decoding coupled with 200 rounds of BP after AMP has converged. The dotted black curve corresponds to the  SPARC-based  scheme of \cite{hsieh2022near} with $\tilde{k}=8$.  $L=2000, k=360, d=720$. }
    \label{fig:bp_v_marginal}
        \vspace{-7pt}
\end{figure}
\begin{figure}[t!]
    \centering
    \includegraphics[width=0.7\linewidth]{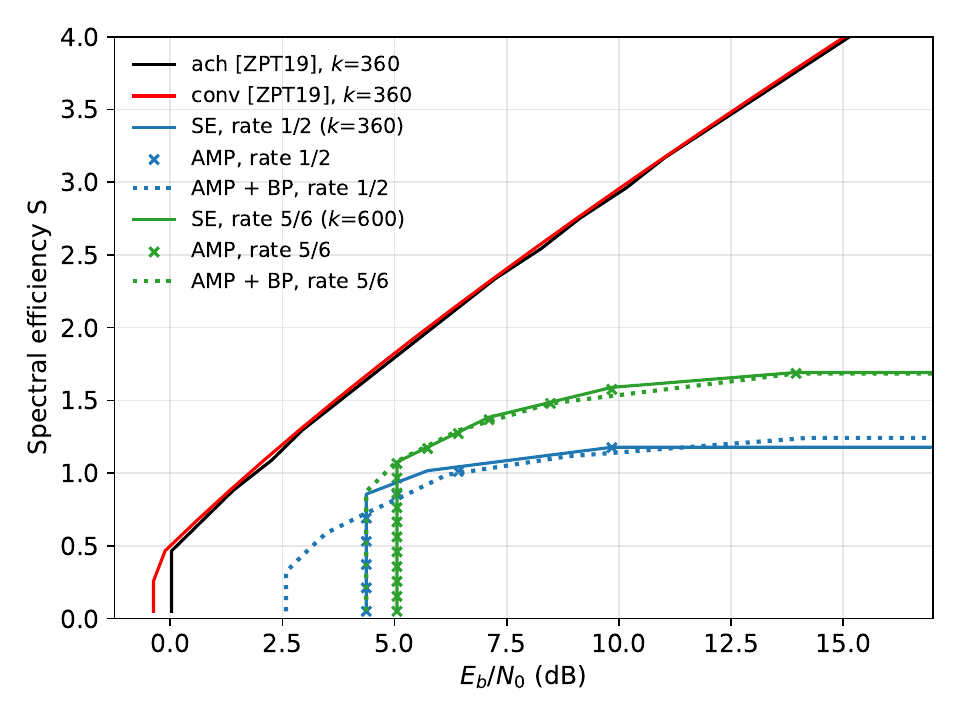}
    \vspace{-0.5cm}
    \caption{Comparison of LDPC outer code with rate 1/2 (blue) or 5/6 (green)  under AMP decoding  with BP denoiser $\eta_t^{\BP}$. Dotted curves  correspond to AMP decoding coupled with 200 rounds of BP after AMP has converged. $L=2000, d=720$. 
    The achievability and converse bounds for $k=600$ are omitted as they  nearly match those for $k=360$.}
    \label{fig:ldpc_diff_rates}
    \vspace{-3pt}
\end{figure}

The dotted black curve in Fig.\@ \ref{fig:bp_v_marginal} corresponds to  the asymptotic performance of the SPARC-based scheme  of \cite{hsieh2022near} with a spatially coupled design and  $\tilde{k}=8$, the highest payload the scheme  could handle in our simulations.
Hence, to transmit a 360-bit payload,  the SPARC-based scheme needs to be used  45  times, with 8 bits transmitted each time. We observe that for smaller spectral efficiency, the SPARC-based scheme is outperformed  by our  concatenated scheme with BP post-processing (dotted blue curve) by  about 2.5dB.

Fig.\@ \ref{fig:ldpc_diff_rates} compares the performance of the concatenated scheme with  LDPC codes with different rates: 1/2 and 5/6. The AMP denoiser is $\eta_t^\BP$, and the dotted curves show the effect of adding BP decoding (200 rounds) after AMP convergence.
The code with the higher rate $5/6$ achieves higher spectral efficiency for large values of $E_b/N_0$, but the rate $1/2$ code achieves a positive spectral efficiency for smaller $E_b/N_0$ values, down to $2.5$dB. We expect that using  an outer LDPC code with a  rate  lower than 1/2 will enable the concatenated scheme to achieve a  positive spectral efficiency at  $E_b/N_0$ values even  below 2.5dB.

In Figs.\@ \ref{fig:hamming}--\ref{fig:ldpc_diff_rates}, the asymptotic performance of AMP, predicted by state evolution, closely tracks its actual performance at large, finite $n, L$ (with the simplification where the off-diagonal entries of the Jacobian are  set to zero). Moreover, 
 considering a metropolitan area with $10^6$ to $10^7$ devices and each device active a few times per hour, the user density $\mu$  is typically $10^{-4}$ to $10^{-3}$ \cite[Remark 3]{zadik2019improved}. For user densities in this range and per-user payload $k$ on the order of $10^2$ to $10^3$, the spectral efficiency $\S = \mu k $ is less than 1. In all figures,  the concatenated coding schemes exhibit the most substantial improvements for $\S <1$.

  The concurrent work  by Ebert et al. \cite{ebert2024multiusersrldpccodes} proposes a scheme for the GMAC using concatenated SPARC-LDPC codes with AMP-BP decoding. Since their construction is based on   SPARC, the memory and the decoding complexity scales exponentially with the section size used for the SPARC. Therefore,  the decoder can handle only a small number of users if the per-user payload is large.  Fig. 4 in \cite{ebert2024multiusersrldpccodes}, which reports the performance of the SPARC-LDPC scheme for $L \le 16$  users each with a payload of $k=584$ bits, shows that a spectral efficiency up to $\S\approx 0.94$ can be achieved at $E_b/N_0=3$dB, for a target $\BER=10^{-4}$. This is  slightly higher than the achievable spectral efficiency in Fig.\@ \ref{fig:ldpc_diff_rates}. However, we emphasize that our scheme is implemented for $L=2000$ users, each with a payload of $360$ bits. It appears challenging to scale the SPARC-LDPC scheme to such a large number of users.

As illustrated in Figs.\@ \ref{fig:hamming}--\ref{fig:ldpc_diff_rates},  the gap between the spectral efficiency achieved by our concatenated scheme and the converse bounds grows with $E_b/N_0$. In the next section,  we demonstrate how the spectral efficiency of our scheme can be substantially improved by using a spatially coupled design matrix \cite{hsieh2022near}.

\section{Spatially coupled  design and AMP decoder}\label{sec:SC_AMP}

In this section, we study the concatenated scheme with a spatially coupled Gaussian design matrix $\bA$. We first define the spatial coupled design,  and  then describe the corresponding AMP decoder and characterize its asymptotic performance (Theorem \ref{thm:UER_BER_sc}). 

A spatially coupled (SC) design  matrix  $\bA \in \reals^{\tn \times L}$ is divided into $\R \times \C$ equally sized blocks. The entries  of  $\bA$ within each block are i.i.d.\@ Gaussian with zero-mean and variance   prescribed by a \emph{base matrix} $\bW\in \reals^{\R \times \C}$. Specifically, 
the matrix $\bA$ is obtained  by replacing each entry  $W_{\sfr,\sfc}$ of the base matrix
by an $\frac{\tn}{\Lr} \times \frac{L}{\Lc}$ block with entries drawn  $\stackrel{\text{i.i.d.}}{\sim} \mc{N}(0, \frac{1}{\tn/\Lr} W_{\sfr,\sfc})$, 
for $\sfr\in[\Lr]$, $\sfc\in[\Lc]$. That is, we have
\begin{equation}
    \label{eq:construct_sc_A}
A_{i\ell}\stackrel{\text{i.i.d.}}{\sim} \mc{N}\bigg(0,\frac{1}{\tn/\Lr} 
W_{\sfr(i), \sfc(\ell)} \bigg),  \quad \text{for }  i \in [\tn], \ \ell\in [L].
\end{equation}
Here, the operators $\sfr(\cdot):[\tn]\rightarrow[\Lr]$ and $\sfc(\cdot):[L]\rightarrow[\Lc]$  map a particular row or column index of $\bA$ to its corresponding \emph{row block} or \emph{column block} index of $\bW$.  See Fig.\@ \ref{fig:spatial_coupling_example} for an example.

\begin{figure}[t!]
    \centering
\includegraphics[width=0.5\linewidth]{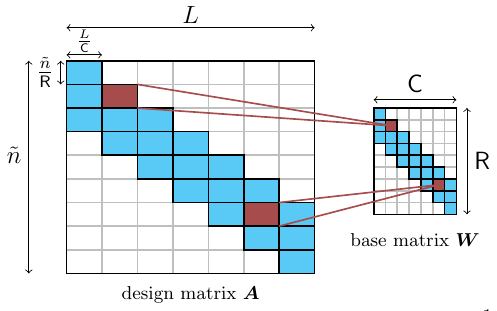}
    \vspace{-0.1cm}
    \caption{A spatially coupled design matrix $\bA$ constructed using a base matrix $\bW$ according to  \eqref{eq:construct_sc_A}. The base matrix shown here is an $(\omega=3, \Lambda=7)$ base matrix (defined in Definition \ref{def:ome_lamb_rho}). The white parts of $\bA$ and $\bW$ correspond to zeros.}
    \label{fig:spatial_coupling_example}
    \end{figure}

 As in \cite{donoho2013information}, the entries of the base matrix $\bW \in \reals_{+}^{\R\times \C}$ are scaled to satisfy: 
$   \sum_{\sfr=1}^{\Lr} W_{\sfr, \sfc} = 1 \; \text{ for } \sfc \in [\Lc]$.
 This is to ensure that the columns of $\bA$ (the signature sequences $\ba_\ell$) have unit squared $\ell_2$-norm in expectation.  Each block of  $\bA$ can be viewed as an (uncoupled) i.i.d.\@ design matrix with ${L}/{\C}$ users and $d{\tn}/{\R} = {n}/{\R}$  channel uses. Thus, we define the \emph{inner} user density as 
\begin{align}
    \muin := \frac{L/\C}{n/\R} = \frac{\R}{\C}\mu .
\end{align}
The standard i.i.d.\@ Gaussian design where  $A_{i\ell}\stackrel{\text{i.i.d.}}{\sim}\mc{N}(0, 1/\tn)$ is  a special case of the SC  design, obtained by using a base matrix with a single entry ($\Lr=\Lc=1$).
We shall use a class of base matrices called  $(\omega, \Lambda)$ base matrices that have also been  used for spatially coupled sparse superposition codes   \cite{rush2021capacity,hsieh2022near}.

\begin{definition}
    \label{def:ome_lamb_rho}
    An $(\omega , \Lambda)$ base matrix $\bW$ is described by two parameters: the coupling width $\omega\geq1$ and the coupling length $\Lambda\geq 2\omega-1$. The matrix has $\Lr=\Lambda+\omega-1$ rows and $\Lc=\Lambda$ columns, with each column having $\omega$ identical nonzero entries in the band-diagonal and zeros everywhere else.  For $\sfr \in [\Lr]$ and $ \sfc\in[\Lc]$, the $(\sfr,\sfc)$th entry of the base matrix is given by
    \begin{equation}
    \label{eq:W_rc}
    W_{\sfr,\sfc} =
    \begin{cases}
            \ \frac{1}{\omega} \quad &\text{if} \ \sfc \leq \sfr \leq \sfc+\omega-1\,,\\
        \ 0 \quad &\text{otherwise}\,.
    \end{cases}
    \end{equation}
    \end{definition}

Fig.\@ \ref{fig:spatial_coupling_example} shows an example of design matrix constructed using an $(\omega=3, \Lambda=7)$ base matrix. From the GMAC perspective, the main difference between such a  design and an i.i.d.\@ Gaussian one is that   only a small fraction of users are active during any given channel use. In Fig.\@ \ref{fig:spatial_coupling_example}, only the first $L/\C$ users are active during the first $\tn/\R$ channel uses, and only the last $L/\C$ users are active for the last $\tn/\R$ channel uses. This allows these two sets of users to be decoded more easily than the others, thus helping the  decoding of the adjacent blocks of users. This creates a decoding wave that propagate from the ends to the center. Examples and figures illustrating the decoding wave in spatially coupled designs can be found in \cite{rush2021capacity,krzakala2012statistical}.

\subsection{Spatially Coupled AMP}
The AMP decoder for a spatially coupled Gaussian design matrix (SC-AMP) is a generalization of the AMP decoder for the i.i.d.\@ Gaussian design  presented in Section \ref{sec:iid_AMP}, accounting for the fact that the SC design has an $\R \times \C$ block-wise structure, with potentially different variances across blocks.
Starting with initialization $\bX^0=\bzero_{L\times d}$ and  $\btZ^0 = \bzero_{\tn \times d}$, 
  the  decoder computes for  $t\ge 0$:
\begin{align}
&\bZ^t = \bY -\bA\bX^t +\btZ^t,\label{eq:alg_sc_Zt}\\
&\bX^{t+1} =\eta_t\left( \bS^t\right), \quad \text{where}\quad \bS^t=\bX^t+\bV^t.
\label{eq:alg_sc_Xt}    
\end{align}
Here the denoising function $\eta_t: \mathbb{R}^{L\times d}\to \mathbb{R}^{L\times d}$ is assumed to be Lipschitz and acts  row-wise on matrix inputs. Denoting the rows of $\bS^t$ by $\bs_\ell^t$ for $\ell \in [L]$,  we have:
\begin{align}
    \eta_t(\bS^t) = \begin{bmatrix}
        \eta_{t,1}\left(\bs^t_1\right)\\
        \vdots\\
        \eta_{t,1}\left(\bs^t_{L/\C}\right)\\
        \vdots\\
        \eta_{t,\C}\left(\bs^t_{(\C-1)L/\C+1}\right)\\
        \vdots\\
        \eta_{t,\C}\left(\bs^t_{L}\right)
    \end{bmatrix}
    \setlength{\arraycolsep}{0pt} 
    \begin{array}{ c }
      \left.\kern-\nulldelimiterspace
      \vphantom{\begin{array}{ c }
          \eta_1  \\ 
        \vdots \\ 
          \eta_1 \\   
        \end{array}}
        \right\}\text{$\frac{L}{\C}$ rows with $\sfc=1$}\\\\
        \vphantom{\vdots} 
        \left.\kern-\nulldelimiterspace
        \vphantom{\begin{array}{ c }
          \eta_\C \\
          \vdots\\
            \eta_\C\\
      \end{array}}
      \right\}\text{$\frac{L}{\C}$ rows with $\sfc=\C$\,,}
    \end{array}
        \label{eq:eta_t_separable}
\end{align}
where $\eta_{t,\sfc}: \mathbb{R}^{d}\to \mathbb{R}^{d}$ corresponds to the denoising function applied to  the $\frac{L}{\C}$ rows  in block $\sfc\in [\C]$. We have used the convention that the function returns a row vector when applied to a row vector.

For $t\ge 0$,  $\tilde{\bZ}^t$ and $ \bV^t$  in \eqref{eq:alg_sc_Zt}--\eqref{eq:alg_sc_Xt} are defined through a matrix 
$\boldsymbol{Q}^t\in\reals^{d\R\times d\C}$, which consists of $ \R \times \C$   submatrices, each of size $d\times d$.  For $\sfr \in [\R], \sfc \in [\C]$, the submatrix $\bQ_{\sfr,\sfc}^t \in \reals^{d\times d}$ is defined as:
\begin{equation}
\bQ_{\sfr,\sfc}^t=\left[\bPhi_{\sfr}^t\right]^{-1}\bT_\sfc^t\,,
    \label{eq:Q_t_def}
\end{equation}
where $\bPhi_\sfr^t, \bT_\sfc^t\in\reals^{d\times d}$ are  deterministic matrices defined later  in \eqref{eq:SC_SE_Phi_t}--\eqref{eq:G_c_t}, as part of the corresponding state evolution. 
The $i$th row of matrix $\btZ^t\in\reals^{\tn \times d}$ then takes the form:
\begin{equation}
    \btz_i^t = {d\muin} \, \bz_i^{t-1} \,  \sum_{\sfc =1}^{\Lc} W_{\sfr(i), \sfc}\,\bQ^{t-1}_{\sfr(i), \sfc} \,  
    \frac{1}{L/\C}\cdot \sum_{\ell\in\mc{L}_\sfc} \left[\eta_{t-1, \sfc(\ell)}'( \bs_\ell^{t-1} )\right]^\top,\quad \text{for}\; i\in[\tn],
    \label{eq:Z_tilde}
\end{equation}
where
 $\mc{L}_\sfc= \{(\sfc-1)L/\C+1, \dots, \sfc L/\C\}$, and 
$\eta_{t,\sfc}'(\bs) = \frac{\de \eta_{t,\sfc}(\bs)}{\de \bs}\in\reals^{d\times d}$ is the Jacobian of $\eta_{t,\sfc}$. Quantities with negative  iteration index are set to all-zero matrices. The $\ell$th row of $\bV^t\in \reals^{L\times d}$ takes the form:
\begin{equation}
    \bv_\ell^t = \sum_{i=1}^{\tn} A_{i\ell} \, \bz_i^t \, \bQ_{\sfr(i),\sfc(\ell)}^t\,, \quad \text{for}\;\ell\in[L]
    \label{eq:SC_AMP_V_t}.
\end{equation}
In \eqref{eq:Z_tilde} and \eqref{eq:SC_AMP_V_t}, the vectors $\tilde{\bz}_i^t, \bz^{t}_i$ and $ \bv_\ell^t$ are all row vectors.

\paragraph{Spatially Coupled State Evolution (SC-SE)}\label{sec:SC-SE} Similar to the  i.i.d.\@ case in Section \ref{sec:iid_AMP}, we now state the asymptotic distributional guarantees for the AMP iterates  with the SC  design.  For each iteration $t\ge 1,$ in the limit as $L, n\to \infty$ with $L/n\to \mu $,  the memory term $\tilde{\bZ}^t$ in \eqref{eq:alg_sc_Zt} ensures that 
the  empirical distribution  of the rows of  $\bZ^t$ in the  block $\sfr\in [\R]$ converges  to a Gaussian  $\mc{N}_d(\bzero, \bPhi_{\sfr}^t) $. Furthermore, the empirical distribution of the  rows of $(\bS^t-\bX)$ in block $\sfc\in [\C] $ converges to another Gaussian $\mc{N}_d(\bzero, \bT_{\sfc}^t)$. The covariance matrices $ \bPhi_{\sfr}^t, \bT_{\sfc}^t\in \reals^{d\times d}$ are iteratively defined via  the spatially coupled state evolution (SC-SE), a deterministic recursion defined as follows. Starting with initialization with $\bPsi_{\sfc}^0 =E\bI_{d}$,  for $t\ge 0 $  and $\sfr\in[\Lr]$, $\sfc\in[\Lc]$ we define:
\begin{align}
&\bPhi_{\sfr}^t  =  \sigma^2 \bI_{d} \, + \,  d\muin\sum_{c=1}^\Lc W_{\sfr,\sfc}\bPsi_\sfc^t\,,\label{eq:SC_SE_Phi_t}\\
&\bPsi_\sfc^{t+1} =  \E \left\lbrace\left[ \eta_{t, \sfc}\left(\bbx +\bg_{\sfc}^{t}\right)-\bbx\right]\left[ \eta_{t, \sfc}\left(\bbx +\bg_{\sfc}^{t}\right)-\bbx\right]^\top\right\rbrace \label{eq:SC_SE_Psi_t},\\
& \qquad \qquad \text{where } \bg^t_{\sfc}\sim\mathcal{N}_d(\bzero, \bT_{\sfc}^t), \quad   \bT_{\sfc}^t ={\left[\sum_{r=1}^\Lr W_{\sfr,\sfc}{[\bPhi_\sfr^t]}^{ -1}\right]^{-1}}. 
\label{eq:G_c_t}
\end{align}
Here  $\bg_{\sfc}^t$ is independent of $\bbx \sim p_{\bbx}$. 
We can interpret $\bPsi_\sfc^t$ as the asymptotic covariance of the error in the estimated codeword $\bx^t_\ell$ relative to the true codeword $\bx_\ell$, for $\ell\in\mc{L}_\sfc$, i.e., for the $\sfc$-th block of users. The asymptotic error rates of AMP decoding with SC design  are characterized by the following theorem.

\begin{thm}[Asymptotic $\UER$ and $\BER$ with SC design]\label{thm:UER_BER_sc} 
Consider  the concatenated scheme with a spatially coupled Gaussian design, with the assumptions in Section \ref{subsec:assumptions}, and the AMP decoding algorithm in \eqref{eq:alg_sc_Zt}--\eqref{eq:alg_sc_Xt} with  Lipschitz continuous denoisers $\eta_t:\reals^{L\times d} \to \reals^{L\times d}$, for $t \ge 1$. Let $\hat{\bx}^{t+1}_\ell = h_{t,\sfc(\ell)}(\bs^t_\ell)$ be the hard-decision estimate of $\bx_\ell$ 
in iteration $t$.
Then the asymptotic $\UER$ and $\BER$ in iteration $t$    satisfy the following almost surely, for $t \ge 0$:
\begin{align}\label{eq:UER_sc_thm}
        \lim_{L\to \infty} \UER &  := 
        \lim_{L\to \infty} \frac{1}{L}\sum_{\ell=1}^L\ind\left\lbrace\hat{\bx}^{t+1}_\ell \neq \bx_\ell\right\rbrace  
        = \frac{1}{\C}\sum_{\sfc=1}^\C\prob \left( h_{t, \sfc}\left(\bar{\bx}+ \bg_\sfc^{t}\right) \neq \bar{\bx} \right),  \\
    \label{eq:BER_sc_thm} 
    \lim_{L\to \infty} \BER &:= 
        \lim_{L\to \infty}\frac{1}{Ld}\sum_{\ell=1}^L\sum_{j=1}^d \ind\left\lbrace \hat{x}^{t+1}_{\ell,j}\neq x_{\ell,j}\right\rbrace  
        = \frac{1}{\C}\sum_{\sfc=1}^\C\frac{1}{d}\sum_{j=1}^d\prob \left( \left[h_{t, \sfc}\left(\bar{x}+g_\sfc^{t}\right)\right]_j\neq \bar{x}_{j}\right).
\end{align}
Here $\bar{\bx} \sim p_{\bar{\bx}}$  and $\bg_\sfc^t\sim \normal_d(\bzero, \bT_\sfc^t)$ are independent, with $\bT_\sfc^t$ defined  in \eqref{eq:G_c_t}. The limit is taken as $n,L\to \infty$ with $L/n\to \mu$.
\end{thm}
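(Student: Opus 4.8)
The proof parallels that of Theorem~\ref{thm:UER_BER} and splits into two stages. \textbf{Stage 1 (SC state evolution).} The first task is to make rigorous the distributional claims stated just before the theorem: for every $t\ge1$, almost surely, the empirical distribution of the rows of $\bZ^t$ falling in row-block $\sfr$ converges (in the Wasserstein-$2$ metric) to $\normal_d(\bzero,\bPhi_\sfr^t)$, and, jointly, the empirical distribution of the pairs $(\bx_\ell,\bs_\ell^t)$ over $\ell\in\mc{L}_\sfc$ converges to the law of $(\bar{\bx},\bar{\bx}+\bg_\sfc^t)$ with $\bg_\sfc^t\sim\normal_d(\bzero,\bT_\sfc^t)$. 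I would obtain this by writing the recursion \eqref{eq:alg_sc_Zt}--\eqref{eq:alg_sc_Xt} in the abstract form of an AMP iteration driven by a spatially coupled Gaussian matrix, with \emph{matrix} iterates (each row a $d$-vector), block-separable Lipschitz denoisers $\eta_{t,\sfc}$, and a $d\times d$ matrix-valued Onsager correction; one then invokes (a matrix/block-separable extension of) the master state evolution theorem for spatially coupled AMP \cite{javanmard2013state}, via the conditioning technique of \cite{bayati2011dynamics}. The hypotheses to check are: Lipschitz continuity of $\eta_t$ (assumed), boundedness of the base-matrix entries, and the initialization $\bPsi_\sfc^0=E\bI_d\succ0$. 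A second, algebra-heavy point in this stage is to verify that the particular matrix $\bQ^t_{\sfr,\sfc}=[\bPhi_\sfr^t]^{-1}\bT_\sfc^t$ in \eqref{eq:Q_t_def} and the memory term $\tilde{\bZ}^t$ in \eqref{eq:Z_tilde} are precisely the choices that debias $\bS^t$ block-by-block, so that $\bS^t-\bX$ is asymptotically zero-mean Gaussian with covariance $\bT_\sfc^t$ in column-block $\sfc$ --- equivalently, that \eqref{eq:SC_SE_Phi_t}--\eqref{eq:G_c_t} is the self-consistent set of SE parameters. Since $\bPhi_\sfr^t\succeq\sigma^2\bI_d$ and $\sum_\sfr W_{\sfr,\sfc}=1$, we get $\bT_\sfc^t\succeq\sigma^2\bI_d\succ0$, so each $\bg_\sfc^t$ --- and hence $\bar{\bx}+\bg_\sfc^t$ conditioned on $\bar{\bx}$ --- has a density on $\reals^d$; this is used in Stage~2.

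\textbf{Stage 2 (from the distributional limit to $\UER$ and $\BER$).} Because each user block $\mc{L}_\sfc$ contains exactly $L/\C$ users, we can write $\UER=\frac{1}{\C}\sum_{\sfc=1}^\C\big(\frac{\C}{L}\sum_{\ell\in\mc{L}_\sfc}\ind\{h_{t,\sfc}(\bs_\ell^t)\neq\bx_\ell\}\big)$. For fixed $\sfc$, the inner sum is the empirical average over $\ell\in\mc{L}_\sfc$ of the test function $\psi_\sfc(\bx,\bs):=\ind\{h_{t,\sfc}(\bs)\neq\bx\}$ evaluated at the pairs $(\bx_\ell,\bs_\ell^t)$. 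This $\psi_\sfc$ is bounded but discontinuous, so it is not directly covered by the pseudo-Lipschitz guarantee of Stage~1; however, Wasserstein-$2$ convergence implies weak convergence of the block-wise empirical distribution to the joint law of $(\bar{\bx},\bar{\bx}+\bg_\sfc^t)$, and the discontinuity set of $\psi_\sfc$ is contained in $\bigcup_{\bx'}(\{\bx'\}\times\partial D_{\bx'})$, where $D_{\bx'}=\{\bs:h_{t,\sfc}(\bs)=\bx'\}$ is the decision region. For the hard-decision maps of interest --- entrywise quantization of $\eta_{t,\sfc}$, or MAP decoding as in \eqref{eq:MAP_est} --- each $\partial D_{\bx'}$ has Lebesgue measure zero, so by absolute continuity of $\bar{\bx}+\bg_\sfc^t$ given $\bar{\bx}$ the discontinuity set has zero mass under the limit law. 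The Portmanteau theorem then gives, almost surely, $\frac{\C}{L}\sum_{\ell\in\mc{L}_\sfc}\psi_\sfc(\bx_\ell,\bs_\ell^t)\to\prob(h_{t,\sfc}(\bar{\bx}+\bg_\sfc^t)\neq\bar{\bx})$, and summing over the finitely many blocks yields \eqref{eq:UER_sc_thm}. The $\BER$ identity \eqref{eq:BER_sc_thm} follows identically after further splitting the inner average over the $d$ coordinates and applying the same argument to the bounded a.e.-continuous functions $(\bx,\bs)\mapsto\ind\{[h_{t,\sfc}(\bs)]_j\neq x_j\}$, now using that $[\bT_\sfc^t]_{jj}\ge\sigma^2>0$ so the $j$-th marginal of $\bg_\sfc^t$ is absolutely continuous.

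\textbf{Main obstacle.} The bulk of the work is Stage~1: unlike classical vector AMP, here the iterates are matrices, the denoiser is genuinely non-separable (block-separable), and the state-evolution objects $\bPhi_\sfr^t,\bT_\sfc^t,\bQ^t_{\sfr,\sfc}$ are $d\times d$ matrices with a matrix-valued Onsager term. Establishing that the recursion \eqref{eq:SC_SE_Phi_t}--\eqref{eq:G_c_t} is exact --- i.e.\ correctly identifying the block covariances and checking that the chosen $\bQ^t_{\sfr,\sfc}=[\bPhi_\sfr^t]^{-1}\bT_\sfc^t$ and $\tilde{\bZ}^t$ produce a block-wise Gaussian $\bS^t-\bX$ with exactly these covariances --- requires the full conditioning/induction machinery (or a careful reduction to an abstract SC-AMP theorem), and this is where essentially all the difficulty lies. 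Stage~2 is then routine apart from the discontinuity of the error indicator, which is dispatched by the absolute-continuity-plus-Portmanteau argument above; the only regularity needed on the decision maps $h_{t,\sfc}$ is that their decision boundaries have Lebesgue measure zero, which holds for entrywise quantization and for MAP decoding.
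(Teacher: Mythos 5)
Your proposal is correct and follows essentially the same route as the paper: the paper obtains your Stage 1 by directly citing the state evolution result for spatially coupled AMP with matrix iterates from \cite{liu2024RA} (rather than re-deriving it via the conditioning machinery), which yields exactly the block-wise Lipschitz test-function convergence \eqref{eq:Lip_conv_sc}, and then passes to indicator test functions exactly as in the proof of Theorem \ref{thm:UER_BER}. The only cosmetic difference is in Stage 2, where the paper sandwiches the indicators between Lipschitz approximations and applies dominated convergence instead of your Portmanteau/a.e.-continuity argument; the two devices are equivalent, and both implicitly rely on the decision boundaries of $h_{t,\sfc}$ being null sets under the limiting Gaussian law.
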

\begin{proof}
    See  Section \ref{sec:proof_sc}. 
\end{proof}

\paragraph*{Choice of $\eta_{t}$ and $h_{t}$ in SC-AMP}
The denoiser $\eta_t$ in the AMP algorithm can be chosen analogously to the i.i.d.\@ case in Section \ref{sec:iid_denoisers}, except that it now acts in a block-dependent manner as shown in \eqref{eq:eta_t_separable}. Take the Bayes-optimal denoiser $\eta_t^{\Bayes}$ as an example:
since the empirical distribution of the  rows of $\bS^t$ in  block $\sfc$ converges to  the law of  $\bar{\bx}+\bg_{\sfc}^t$, with $\bg_\sfc^t\sim\normal_d(\bzero, \bT_\sfc^t)$, we define for $\sfc \in [\C]$ and $\bs\in \reals^{d}$, 
\begin{align}
   &\eta_{t,\sfc}^{\Bayes}(\bs)  =\mathbb{E} \left[\bbx \ |\  \bbx + \bg_{\sfc}^t=\bs \right] 
   = \sum_{\bx' \in \mc{X} }\bx'\cdot \frac{   \exp \left( -\frac{1}{2}(\bx' - 2\bs )^\top \left(\bT_\sfc^t\right)^{-1} \bx' \right)}{  \sum_{\tilde{\bx}'\in \mc{X}} \exp \left( -\frac{1}{2}(\tilde{\bx}' - 2\bs)^\top \left(\bT_\sfc^t\right)^{-1}\tilde{\bx}'\right)}\, .\label{eq:Bayes_denoiser_sc}
\end{align} 
The marginal-MMSE denoiser $\eta_t^{\marginal}$ and the BP denoiser $\eta_t^{\BP}$ can be defined similarly, with the only difference from the i.i.d.\@ versions being that the covariance matrices are block-dependent for the spatially coupled case. The hard-decision estimator $h_{t}$ is also be defined analogously, with  block-dependent covariance. For example, the MAP hard-decision estimate takes the form 
\begin{align}
    h_{t,\sfc}(\bs)
   &= \argmax_{\bx'\in \mc{X} } \prob\left({\bbx} = \bx' \mid \bbx+ \bg_{\sfc}^t =\bs\right).\label{eq:MAP_est_sc}
\end{align}
Observe that \eqref{eq:Bayes_denoiser_sc}--\eqref{eq:MAP_est_sc} are the counterparts of \eqref{eq:Bayes_denoiser}--\eqref{eq:MAP_est}.

\subsection{Numerical results for spatially coupled  design}
\label{sec:SC_numerical}

\begin{figure}[t!]
    \centering
    \includegraphics[width=0.7\linewidth]{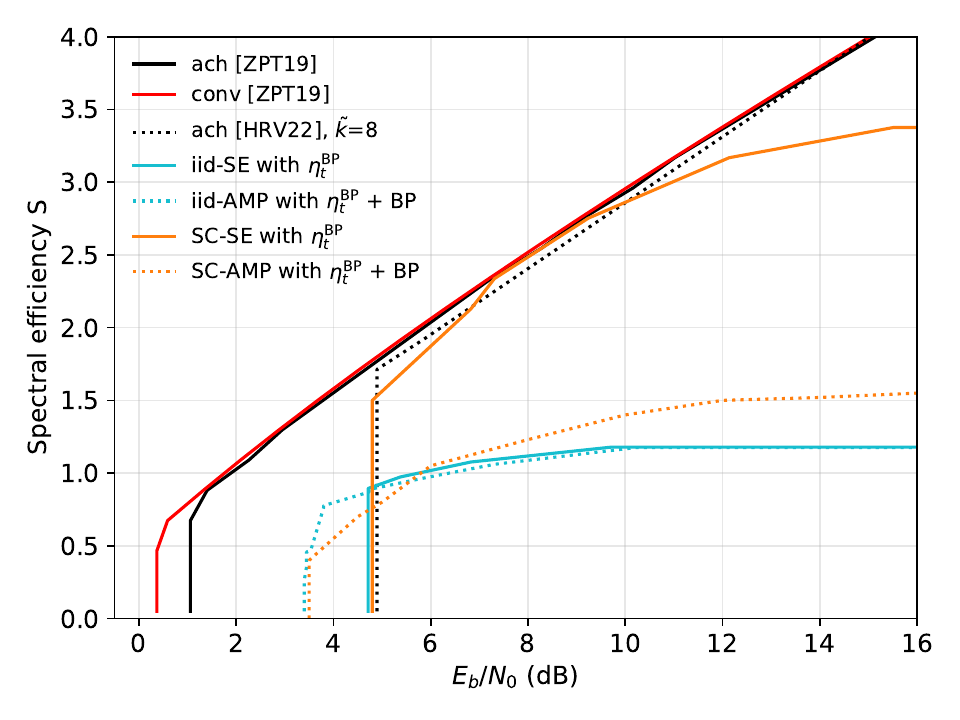}
    \vspace{-0.5cm}
    \caption{Performance of the concatenated scheme with i.i.d.\@ design (cyan) vs.\@ SC design (orange). Dotted cyan and orange curves correspond to AMP with $\eta_t^\BP$ coupled with 200 rounds of BP after AMP has converged. Solid cyan and orange curves are obtained using state evolution. Outer  LDPC code has rate 1/2,    $ k=120$, and $L=2000$ for i.i.d.\@ design; $L=40000$ and $(\omega=4, \Lambda=20)$ for SC design.} \label{fig:ldpc_iid_v_sc_z10}
    \vspace{-3pt}
\end{figure}
In Fig.\@ \ref{fig:ldpc_iid_v_sc_z10}, 
we numerically compare the  error rates of the concatenated scheme with the i.i.d.\@ Gaussian design and the SC Gaussian design  constructed using an $(\omega=4, \Lambda=20)$ base matrix.  We consider the user payload  $k=120$,  a rate 1/2 outer LDPC code, and  the  target $\BER=10^{-4}$. 
 The solid  black and red curves in Fig.\@ \ref{fig:ldpc_iid_v_sc_z10} correspond to the achievability and converse bounds of \cite{zadik2019improved}.  The  `iid-SE' and `SC-SE' plots are obtained using the asymptotic error  characterization of our concatenated scheme (as $n,L\to\infty$ with $L/n\to \mu$), provided in Theorems \ref{thm:UER_BER} and \ref{thm:UER_BER_sc} for i.i.d.\@ and  SC Gaussian design matrices (with $\omega=4, \Lambda=20$), respectively. The dotted black curve is the asymptotic performance of the SPARC-based scheme  of \cite{hsieh2022near} with a spatially coupled design (as $n,L\to \infty$ with $L/n\to \mu$, and  $\omega, \Lambda\to \infty$ with $\Lambda/\omega \to \infty$). As in  Fig. \ref{fig:bp_v_marginal}, the highest payload the AMP decoder for SPARC-based scheme can handle in simulations is $\tilde{k}=8$. 
Hence, to transmit a 120-bit payload,  the scheme needs to be used  15  times, with 8 bits transmitted each time.

 We observe that  for higher values of $E_b/N_0$, the spectral efficiency asymptotically achievable by  the concatenated scheme with the spatially coupled design (`SC-SE') is significantly higher than with the i.i.d.\@ design (`iid-SE').  The  `SC-SE' plot plateaus at  $E_b/N_0>12$dB, unlike the SPARC-based scheme of \cite{hsieh2022near} (dotted black), since our results are derived for fixed, relatively small values of $\omega, \Lambda$ ($\omega=4, \Lambda=20$), while the performance characterization in \cite{hsieh2022near} is for $\omega, \Lambda\to \infty$ with $\Lambda/\omega \to \infty$.

 The `iid-AMP' and `SC-AMP' plots   illustrate the simulated  performance of the AMP decoder with the BP denoiser $\eta_t^{\BP}$ with i.i.d.\@ or SC  design, followed by 200 rounds of BP. The number of users considered is $L=2000$ for the i.i.d.\@ case and $L=40000$ for the SC case. The gap between `SC-AMP' and `SC-SE' is due to the finite-length effects, similar to the observations made in \cite{hsieh2022near}. We observe that for lower values of $E_b/N_0$, the achievable region of the concatenated scheme  with BP post-processing, marked by `iid-AMP' and `SC-AMP', is noticeably larger than that of the near-optimal scheme of \cite{hsieh2022near} (dotted black). The experiments presented in Fig.\@ \ref{fig:ldpc_iid_v_sc_z10} used a relatively  short LDPC code. The performance gains  at small $E_b/N_0$ could be substantially enhanced by using a longer LDPC code in the concatenated scheme (see Fig.\@ \ref{fig:bp_v_marginal}).

 The Python code to reproduce all the numerical results is available at our public GitHub repository \cite{liu_amp_github}.

\section{Implementation details}
We discuss a few  implementation details for the numerical results in Sections \ref{sec:iid_numerical} and \ref{sec:SC_numerical}.

\paragraph{Covariance estimation}
   In our  iid-AMP simulations,  the state evolution parameter $\bSigma^t\in\mathbb{R}^{d \times d}$ in \eqref{eq:SE_Sigma_k+1}, used by the Bayes-optimal denoiser $\eta_t^\Bayes$ in \eqref{eq:Bayes_denoiser} and the MAP hard-decision estimator $h_t$ in \eqref{eq:MAP_est}, is  estimated  using the modified residual term $\bZ^t$ in \eqref{eq:alg_Zt} via:
\begin{equation}
    \hat{\bSigma}^t = \frac{1}{\tn}\sum_{i\in[\tn]} {\bz_i^t}({\bz_i^t})^\top,\qquad \text{for}\; t\ge 1.
\end{equation}
State evolution results (not stated here) guarantee that for $t \ge 1$, we have $\hat{\bSigma}^t \to \bSigma^t$ almost surely as $L \to \infty$ with $L/n \to \mu$. 
Recall that $d = n / \tn$ does not grow with $n$ and $L$. Since the AMP decoder uses $\eta_t^\Bayes$ and the MAP estimator $h_t$ only for small $d$ (e.g., $d=7$ for the Hamming code in Fig.\@ \ref{fig:hamming}),  a sufficiently accurate estimate  $\hat{\bSigma}^t$ can be obtained with moderately large $\tn$ and $L$.

For larger $d$ (e.g., $d=720$ for the LDPC  code used in Figs. \ref{fig:bp_v_marginal} and \ref{fig:ldpc_diff_rates}), the AMP decoder employs $\eta_t^\marginal$ or $\eta_t^\BP$ which do not require the full covariance matrix $\bSigma^t$, but only the diagonal entries $\Sigma^t_{j,j}$ for $j\in[d]$. We  estimate these entries via:
\begin{equation}
    \hat{\Sigma}^t_{j,j} = \frac{1}{\tn}\sum_{i\in[\tn]} \left(\bz_i^t\right)_j^2,\qquad \text{for}\; t\ge 1, \; j\in[d].
\end{equation}
Estimating the diagonal entries alone is much less computationally expensive than the full $d\times d$ covariance matrix, and  ensures the  computational cost of the AMP decoder with $\eta_t^\marginal$ or $\eta_t^\BP$ denoiser remain linear in $d$ (and $k$). 

Similarly, in our SC-AMP simulations, the SC-SE parameter $\bT_\sfc^t \in \reals^{d\times d}$ in \eqref{eq:G_c_t} for $\sfc\in[\C]$ is estimated consistently via
\begin{equation}
    \hat{\bT}_\sfc^t= \frac{1}{\tn/\R}\sum_{i\in\mc{I}_{\sfr}} {\bz_i^t} (\bz_i^t)^\top, \qquad \text{for}\; t\ge 1,
\end{equation}
where $\mc{I}_\sfr=\{(\sfr-1)\tn/\R+1, \dots, \sfr\tn/\R\}$.

\paragraph{Estimation of state evolution MSE} 
In each iteration of the iid-SE recursion  \eqref{eq:SE_Sigma_k+1}, the MSE term $\E\{[ \eta_{t}(\bar{\bx} +\bg^{t})-\bar{\bx}] [\eta_{t}(\bar{\bx} +\bg^{t})-\bar{\bx}]^\top\}$  is  estimated by generating  Monte Carlo samples of $\bg^t\stackrel{\text{i.i.d.}}{\sim}\normal_d(\bzero, \bSigma^t)$ one randomly chosen sample of $\bbx$, and applying  $\eta_t$ to each sample of $\bar{\bx}+\bg^t$. Due to the linearity of the outer code, this MSE term is independent of the chosen  sample of $\bbx$, so we can fix $\bbx$ to be the all-zero codeword across all iterations. For consistently estimating the $d \times d$ MSE matrix, the number of  samples of $\bg^t$  needs to be much larger than $d$.
Similarly, in the spatially coupled case,  for each $\sfc\in[\C]$, the MSE term $\E\{[ \eta_{t,\sfc}(\bbx +\bg_{\sfc}^{t})-\bbx] [\eta_{t,\sfc}(\bbx +\bg_{\sfc}^{t})-\bbx]^\top\}$ in \eqref{eq:SC_SE_Psi_t} is estimated using many Monte Carlo samples of $\bg_{\sfc}^t\stackrel{\text{i.i.d.}}{\sim}\normal_d(\bzero, \bT_{\sfc}^t)$  and one randomly chosen sample of $\bbx$. 
With the $\eta_t^\BP$ denoiser, the computational and memory cost of MSE estimation  could potentially be reduced using population dynamics \cite{Mezard2009information}.

\section{Proofs of theorems} \label{sec:proof_UER_BER}
\subsection{Proof of Theorem \ref{thm:UER_BER}}
\label{sec:proof_iid}
The proof is similar to the proof of Theorem 1 in \cite{hsieh2022near}. We begin with a state evolution characterization of the AMP iterates that follows from standard results in the AMP literature \cite{bayati2011dynamics}, \cite[Section 6.7]{feng2022unifying}. For any  Lipschitz test function $\varphi: \reals^{d} \times \reals^{d} \to \reals$ and $t\ge 1$, we almost surely have:
\begin{align}
    \label{eq:Lip_conv}
    \lim_{L\to \infty} \frac{1}{L}\sum_{\ell=1}^L
    \varphi(\bx_\ell, \bs_\ell^t)  = \E\left[\varphi (\bar{\bx}, \bar{\bx} + \bg^t)\right].
\end{align}
The claims in \eqref{eq:UER_thm} and \eqref{eq:BER_thm} require a test function $\varphi$ that is defined via indicator functions, which is not  Lipschitz. We handle this by sandwiching 
it between two Lipschitz functions that both converge to the required function in a suitable limit.

We prove \eqref{eq:BER_thm}, and omit  the proof of \eqref{eq:UER_thm} as it is simpler and follows along the same lines.
For $x_{\ell,j}\in \{\pm\sqrt{E}\}$, partition the space $\reals^d$ into two decision regions:
\begin{align}
    \mc{D}\left(x_{\ell,j}\right) :=\left\lbrace \bs_\ell^t: \left[h_t(\bs_\ell^t)\right]_j = x_{\ell,j}\right\rbrace,
\end{align}
and note that $ \ind\{\hat{x}_{\ell,j}^{t+1} = x_{\ell,j}\} = \ind\{\bs_\ell^t \in \mc{D}\left(x_{\ell,j}\right)\}$. Let $d\left(\bv, \mc{D}\right): = \inf \{\|\bv-\bu\|_2 : \bu\in\mc{D}\}$ denote the distance between a vector $\bv\in\reals^d$ and a set $\mc{D} \subset\reals^d.$ For any $\epsilon >0$, define  functions $\xi_{\epsilon,+}, \xi_{\epsilon,-}: \reals\times \reals^d \to \reals $ as follows:
\begin{align}
    &\xi_{\epsilon, +}\left(x_{\ell,j}, \bs_\ell^t\right) :=\begin{cases}
               1 , \quad\quad\quad\quad\quad\quad\quad \quad\text{if }\bs_\ell^t\in \mc{D}\left(x_{\ell,j}\right)\\
               0,  \quad\quad\quad\quad\quad\quad\quad \quad\text{if } d\big[\bs_\ell^t, \mc{D}\left(x_{\ell,j}\right)\big]> \epsilon\\
               1- d\big[\bs_\ell^t, \mc{D}\left(x_{\ell,j}\right)\big]/\epsilon, \;\text{otherwise}
            \end{cases} \nonumber\\
    &\xi_{\epsilon, -}\left(x_{\ell,j}, \bs_\ell^t\right) :=\begin{cases}
               1 , \quad\quad\quad\quad\quad\quad\qquad\; \text{if }d\big[\bs_\ell^t, \mc{D}\left(x_{\ell,j}\right)^c\big]> \epsilon\\
               0, \quad\quad\quad\quad\quad\quad\qquad\;\text{if } 
               \bs_\ell^t\in \mc{D}\left(x_{\ell,j}\right)^c\\
                d\big[\bs_\ell^t, \mc{D}\left(x_{\ell,j}\right)^c\big]/\epsilon, \, \ \quad\text{otherwise}
            \end{cases}       \nonumber.
\end{align}
Note that $\xi_{\epsilon,+}, \xi_{\epsilon,-}$ are Lipschitz-continuous with Lipschitz constant $1/\epsilon$. Moreover, define $\varphi_{\epsilon,+}: \reals^d\times \reals^d\to \reals$  as
\begin{align}
    \varphi_{\epsilon,+}(\bx_\ell, \bs_\ell^t):=\frac{1}{d}\sum_{j=1}^d \xi_{\epsilon,+}\left( x_{\ell,j}, \bs_\ell^t\right),
\end{align}
and define $\varphi_{\epsilon,-}$ analogously. Then  $\varphi_{\epsilon,+}$ and $  \varphi_{\epsilon,-}$, being sums   Lipschitz of functions, are also Lipschitz. 

For any $\epsilon>0$ and $\ell \in [L]$, we have
\begin{align}
 \varphi_{\epsilon,-}(\bx_\ell, \bs_\ell^t) \le  \frac{1}{d}\sum_{j=1}^d \ind\left\lbrace \bs_\ell^t \in \mc{D}\left(x_{\ell,j}\right)\right\rbrace \le \varphi_{\epsilon,+}(\bx_\ell, \bs_\ell^t).\label{eq:bound_rowwise_BER}
\end{align}
Applying \eqref{eq:Lip_conv} to $\varphi_{\epsilon,-}$ and $\varphi_{\epsilon,-}$, we have
\begin{equation}
\label{eq:psi_eps}
   \begin{split}
  & \lim_{L \to \infty} \frac{1}{L}\sum_{\ell=1}^L  \varphi_{\epsilon,-} (\bx_\ell, \bs_\ell^t) = \E\left[\varphi_{\epsilon,-}(\bar{\bx}, \bar{\bx}+\bg^t)\right], \\
     & \lim_{L \to \infty}  \frac{1}{L}\sum_{\ell=1}^L  \varphi_{\epsilon,+} (\bx_\ell, \bs_\ell^t) = \E\left[\varphi_{\epsilon,+}(\bar{\bx}, \bar{\bx}+\bg^t)\right].
   \end{split} 
\end{equation}
The functions $\varphi_{\epsilon,-}(\bx_\ell, \bs_\ell^t)$ and $\varphi_{\epsilon,+}(\bx_\ell, \bs_\ell^t)$ both converge pointwise to  $ \frac{1}{d}\sum_{j=1}^d \ind\left\lbrace \bs_\ell^t \in \mc{D}\left(x_{\ell,j}\right)\right\rbrace $ as $\epsilon\to0.$ Thus by the Dominated Convergence Theorem, we have 
\begin{align}
    \lim_{\epsilon\to0}
\E\left[\varphi_{\epsilon,-}(\bar{\bx}, \bar{\bx}+\bg^t)\right] 
 = \frac{1}{d}\sum_{j=1}^d\E\left[\ind\left\lbrace \bar{\bx}+\bg^t\in \mc{D}(\bar{x}_j)\right\rbrace\right] 
= \lim_{\epsilon\to0}
\E\left[\varphi_{\epsilon,+}(\bar{\bx}, \bar{\bx}+\bg^t)\right].\label{eq:DCT}
\end{align}

From \eqref{eq:bound_rowwise_BER}, \eqref{eq:psi_eps}, and \eqref{eq:DCT}, we conclude  that almost surely:
\begin{align}
 & \lim_{\epsilon \to 0} \lim_{L \to \infty} \frac{1}{L}\sum_{\ell=1}^L  \varphi_{\epsilon,-} (\bx_\ell, \bs_\ell^t)
  =  \lim_{\epsilon \to 0} \lim_{L \to \infty} \frac{1}{L}\sum_{\ell=1}^L  \varphi_{\epsilon,+} (\bx_\ell, \bs_\ell^t) \nonumber \\
 &  =  \lim_{L\to \infty} \frac{1}{Ld}\sum_{\ell=1}^L\sum_{j=1}^d \ind\left\lbrace \bs_\ell^t \in \mc{D}\left(x_{\ell,j}\right)\right\rbrace \nonumber\\
   & =\frac{1}{d}\sum_{j=1}^d\E\left[\ind\left\lbrace \bar{\bx}+\bg^t\in \mc{D}(\bar{x}_j)\right\rbrace\right]. \label{eq:proof_BER}
\end{align}
By recalling $ \ind\{\bs_\ell^t \in \mc{D}(x_{\ell,j})\}=\ind\{\hat{x}_{\ell,j}^{t+1} = x_{\ell,j}\} $ and noticing $\ind\{\bar{\bx}+\bg^t\in \mc{D}(\bar{x}_j)\} = \ind\{[h_t(\bar{x}+g^t)]_j = \bar{x}_j\}$, we see that  
\eqref{eq:proof_BER} is equivalent to \eqref{eq:BER_thm}. \qed

\subsection{Proof of Theorem \ref{thm:UER_BER_sc}}
\label{sec:proof_sc}
A state evolution result for the SC-AMP iteration applied to the generic spatially coupled linear model $\bY = \bA \bX + \bEps$ was obtained in \cite{liu2024RA}. In this model, $\bA \in \reals^{\tn\times L}$ is a spatially coupled Gaussian design matrix,  and $\bX \in \reals^{L \times d}$ and $\bEps \in \reals^{\tn \times d} $ are matrices whose row-wise empirical distributions converge in Wasserstein distance to well-defined limits. 

These assumptions are satisfied in our setting since $\bX$ and $\bEps$ are both row-wise i.i.d.
Applying the state evolution result in \cite{liu2024RA},   for any  Lipschitz test function $\varphi: \reals^{d} \times \reals^{d} \to \reals$,  $t\ge 1$ and $\sfc\in[\C]$, we have:
    \begin{align}\label{eq:Lip_conv_sc}
    \lim_{L\to \infty} \frac{1}{L/\C}\sum_{\ell\in\mc{L}_{\sfc}}
    \varphi(\bx_\ell, \bs_\ell^t)  = \E\left[\varphi (\bar{\bx}, \bar{\bx} + \bg_\sfc^t)\right],
    \end{align}
    where $\mc{L}_\sfc= \{(\sfc-1)L/\C+1, \dots, \sfc L/\C\}$. 
    
    Finally, as in the proof of Theorem \ref{thm:UER_BER}, we apply \eqref{eq:Lip_conv_sc} to Lipschitz approximations of suitable indicator functions and combine with a sandwich argument   to  obtain \eqref{eq:UER_sc_thm}--\eqref{eq:BER_sc_thm}. \qed

\section{Discussion and future directions}
This paper investigated   communication  over the GMAC in the  many-user  regime, where the number of users $L$ scales linearly with the codelength $n$. We  proposed a CDMA-type concatenated coding scheme with an efficient AMP decoder that can be tailored to the outer code. The asymptotic performance tradeoff was rigorously characterized, and it was shown that the scheme achieves state-of-the-art error performance  with an outer LDPC code and AMP decoding with a belief propagation (BP) denoiser. 

Although we considered Gaussian design matrices $\bA$ (i.i.d.\@ or spatially coupled), using recent  results on AMP universality \cite{Wan22, Tan23d}, the decoding algorithm and all the theoretical results remain valid for a much broader class of `generalized white noise' matrices. This class includes i.i.d.\@ sub-Gaussian matrices, so the results apply to the popular setting of random binary-valued signature sequences \cite{shamaiVerdu01, guo2005randomly}. 

The concatenated scheme and its analysis can also be extended to nonbinary CDMA constellations. Based on results for sparse superposition coding \cite{hsieh2021modulated,hsieh2022near},   we expect that using PSK constellations instead of binary will improve the performance tradeoff for complex-valued channels.

Our theoretical results for the AMP decoder with a BP denoiser  are valid only when  BP  is executed for fewer rounds than the girth of the LDPC factor graph (Lemma \ref{lem:etaBP_deriv}). However,  simulations indicate that the performance of AMP decoding may be  improved if BP is run for additional rounds in each  denoising step. Precisely characterizing the performance of AMP when the number of BP rounds is larger than the girth of the factor graph is a challenging open question. Another open question is to characterize the performance improvement achieved by executing several rounds of BP after AMP has converged. 

Our asymptotic performance curves were obtained by running the state evolution recursion until convergence, with each iteration involving a  computationally intensive MSE estimation step (see \eqref{eq:SE_Sigma_k+1} and \eqref{eq:SC_SE_Psi_t}). An alternative approach would be to directly characterize the fixed points of state evolution for sufficiently large base matrices using a potential function analysis, similar to  \cite{hsieh2022near, kowshik2022improved}. The main challenge  is that in our setting, such an analysis would require finding the extremal values of a potential function over (positive-definite) $d  \times d$ matrices \cite{aubin2018committee}, which is infeasible even for moderately large $d$. One approach to simplify this optimization would be to assume symmetries in the structure of the optimal matrix.

This paper focused on the conventional GMAC setting where users have distinct codebooks. It would be very interesting to adapt the coding scheme  to  \emph{unsourced}   random access  \cite{polyanskiy2017perspective, Amalladinne2020Coded,fengler2021SPARCS,amalladinne2022unsourced}, where the users share a common codebook and only a subset of users are active. 
    
 \paragraph*{Acknowledgment}
X.\@ Liu was supported in part by a Schlumberger Cambridge International Scholarship funded by the Cambridge Trust.
 We thank Jossy Sayir for sharing his implementation of the belief propagation LDPC decoder, and Pablo Pascual Cobo for helping with the implementation of the SC-AMP decoder. We thank Jean-Francois Chamberland for noticing an error in an earlier version of Lemma \ref{lem:etaBP_deriv}.

\bibliographystyle{IEEEtran}
{\small \bibliography{coding} } 
\end{document}